\newtheorem{theorem}{Theorem}
\newtheorem{lemma}{Lemma}
\newtheorem{proposition}{Proposition}
\newtheorem{defn}{Definition}
\newtheorem{assump}{Assumption}
\newtheorem{prob}{Problem}
\newtheorem{remark}{Remark}
\newcommand{\norm}[1]{\left\lVert{#1}\right\rVert}
\newcommand{\abs}[1]{\left\lvert{#1}\right\rvert}
\newcommand{\nn}{\nonumber}
\newcommand{\pmat}[1]{\begin{pmatrix}#1\end{pmatrix}}
\newcommand{\R}{\mathbb{R}}
\newcommand{\N}{\mathbb{N}}
\renewcommand{\P}{\mathcal{P}}
\renewcommand{\S}{\mathcal{S}}
\newcommand{\KL}{\mathcal{KL}}
\newcommand{\Kinfty}{\mathcal{K}_{\infty}}
\newcommand{\G}{\mathcal{G}}
\newcommand{\V}{\mathcal{V}}
\newcommand{\C}{\mathcal{C}}
\newcommand{\SR}{\mathcal{S}_{\mathcal{R}}}
\newcommand{\Nsw}{\mathrm{N}}
\newcommand{\Tsw}{\mathrm{T}}
\title[Stabilizing switched systems]{Stabilizing switched nonlinear systems\\under restricted but arbitrary switching signals}
\author{Atreyee Kundu}
\thanks{The author is with the Department of Electrical Engineering, Indian Institute of Technology Kharagpur, West Bengal - 721302, India, email: atreyee@ee.iitkgp.ac.in. }
\keywords{Switched systems, nonlinear systems, restricted switching, input/output-to-state stability, multiple Lyapunov-like functions, graph theory}
\date{\today}
\begin{document}

	\maketitle

	\begin{abstract}
        This paper deals with input/output-to-state stability (IOSS) of switched nonlinear systems whose switching signals obey pre-specified restrictions on admissible switches between the subsystems and admissible dwell times on the subsystems. We present sufficient conditions on the subsystems, admissible switches between them and admissible dwell times on them, such that a switched system generated under all switching signals obeying the given restrictions is IOSS. Multiple Lyapunov-like functions and graph theory are the key apparatuses for our analysis. A numerical example is presented to demonstrate our results. 
	\end{abstract}
	
\section{Introduction}
\label{s:intro}
    A \emph{switched system} has two ingredients --- a family of systems and a switching signal. The \emph{switching signal} selects an \emph{active subsystem} at every instant of time, i.e., the system from the family that is currently being followed \cite[Section 1.1.2]{Liberzon}. Such systems find wide applications in power systems and power electronics, automotive control, aircraft and air traffic control, network and congestion control, etc. \cite[p.\ 5]{Sun}. 
    
    In this paper we will study input/output-to-state stability (IOSS) of continuous-time switched nonlinear systems whose switching signals obey pre-specified restrictions on admissible switches between the subsystems and admissible dwell times on the subsystems. Such restrictions on switching signals arise in many engineering applications, see e.g., \cite[Remark 1]{abc} for examples. IOSS property of a switched system implies that irrespective of the initial state, if the inputs and the observed outputs are small, then the state of the system will become small eventually. This property is useful in the design of state-norm estimators for switched systems. 
     
    IOSS of switched systems is studied in the literature broadly in three directions --- stability under arbitrary switching, stability under classes of (average) dwell time switching signals, and stability under switching signals that obey pre-specified restrictions.
    IOSS of switched differential inclusions under arbitrary switching is studied in \cite{Mancilla_2005}. The analysis relies on the existence of a common IOSS Lyapunov function. IOSS of a class of hybrid systems that admits a Lyapunov function satisfying an IOSS relation both along the flow and during the jumps, is addressed in \cite{Sanfelice_2010}. 
    IOSS of switched systems under average dwell time switching \cite[Chapter 3]{Liberzon} is studied in \cite{Liberzon_2012}. The authors consider two settings: all subsystems are IOSS and some subsystems are unstable. It is shown that if a switching signal obeys an average dwell time property (resp., an average dwell time property and a constrained activation of unstable subsystems), then IOSS of the resulting switched system is preserved. IOSS of impulsive switched systems is addressed in \cite{Li_2018}. Stability conditions for both stabilizing and destabilizing impulses are presented by employing Lyapunov functions and average dwell time switching conditions. Recently in \cite{Liu_2022a} 
    ISS and an integral version of it are studied using nonlinear supply functions. Sufficient conditions for integral ISS of switched systems with jumps under slow switching is proposed in \cite{Liu_2022b}. 
    In \cite{abc} we present an algorithm to design \emph{a} periodic switching signal that obeys pre-specified restrictions on admissible switches between the subsystems and admissible minimum and maximum dwell times on the subsystems and preserves IOSS of the resulting switched system. The said design involves finding a class of cycles on a weighted directed graph representation of a switched system that satisfies certain conditions involving multiple Lyapunov-like functions and the admissible dwell times. In \cite{def,ghi} we focus on identifying \emph{a class of stabilizing switching signals} that obeys given restrictions. Our characterization of stabilizing switching signals relies on frequency and fraction of activation of various classes of subsystems. 
    
    In this paper we continue with our study of IOSS under switching signals that obey pre-specified restrictions on admissible switches between the subsystems and admissible dwell times on the subsystems. Instead of focussing on stabilizing subclasses of these restricted switching signals or designing a stabilizing element in the class of restricted switching signals, we are interested in stability under every element in the class. We present sufficient conditions on the subsystems dynamics, set of admissible switches between the subsystems and set of admissible dwell times on the subsystems under which every switching signal that obeys the given restrictions preserves IOSS of the resulting switched system.  
    
    Multiple IOSS-Lyapunov-like functions and graph theoretic arguments are employed in our analysis. We associate a weighted directed graph with a family of systems and the admissible switches between them in a natural way. Properties of the subsystems and the switches between them are captured by certain scalars computed from their corresponding IOSS-Lyapunov-like functions and are associated to vertex and edge weights of the graph. The admissible switching signals, i.e., the switching signals that obey the given restrictions on admissible switches and admissible dwell times, are associated to a class of infinite walks on the underlying weighted directed graph of the switched system described above. 
    
    We show that if the subsystems dynamics and the given restrictions are such that the underlying weighted directed graph of the switched system admits a class of finite walks that satisfies a certain property that we call contractivity, then the switched system under consideration is IOSS under all switching signals obeying the given restrictions. In particular, the infinite walks of our interest contain these walks as subwalks and contribute to stabilizing property of their corresponding switching signals. We also provide sufficient conditions under which our stability conditions can be verified in a numerically tractable manner. To the best of our knowledge, this is the first instance in the literature where IOSS of continuous-time switched nonlinear systems under arbitrary elements belonging to a class of switching signals that obeys pre-specified restrictions on admissible switches between the subsystems and admissible dwell times on the subsystems is reported.
    
    The remainder of this paper is organized as follows: in Section \ref{s:prob_stat} we formulate the problem under consideration. We catalogue a set of preliminaries required for our results in Section \ref{s:prelims}. Our results appear in Section \ref{s:mainres}. A numerical example is presented in Section \ref{s:numex}. We conclude in Section \ref{s:concln} with a brief discussion of future research direction. Proofs of our results appear in a consolidated manner in Section \ref{s:proofs}.
    
    We employ standard notation throughout the paper. \(\R\) is the set of real numbers, \(\norm{\cdot}\) is the Euclidean norm, and for any interval \(I\subseteq[0,+\infty[\), \(\norm{\cdot}_{I}\) is the essential supremum norm of a map from \(I\) into some Euclidean space. 
\section{Problem statement}
\label{s:prob_stat}
    We consider a family of continuous-time nonlinear systems
    \begin{align}
    \label{e:family}
    \begin{aligned}
        \dot{x}(t) &= f_p(x(t),v(t)),\\
        y(t) &= h_p(x(t)),
    \end{aligned}
        \:x(0) = x_0,\:p\in\P,\:t\in[0,+\infty[,
    \end{align}
    where \(x(t)\in\R^d\), \(v(t)\in\R^m\) and \(y(t)\in\R^p\) are the vectors of states, inputs and outputs at time \(t\), respectively, \(\P=\{1,2,\ldots,N\}\) is an index set. Let \(\sigma:[0,+\infty[\to\P\) be the \emph{switching signal} --- it is a piecewise constant function that selects at each time \(t\), the index of the active subsystem, i.e., the system from the family \eqref{e:family} that is currently being followed. By convention, \(\sigma\) is assumed to be continuous from right and having limits from the left everywhere. We let \(\S\) denote the set of all switching signals. A switched system generated by the family of systems \eqref{e:family} and a switching signal, \(\sigma\), is given by
    \begin{align}
    \label{e:swsys}
    \begin{aligned}
        \dot{x}(t) &= f_{\sigma(t)}(x(t),v(t)),\\
        y(t) &= h_{\sigma(t)}(x(t)),
    \end{aligned}
    \:x(0)=x_0,\:t\in[0,+\infty[.
    \end{align}
    We assume that for each \(p\in\P\), \(f_p\) is locally Lipschitz, \(f_p(0,0) = 0\) and \(h_p\) is continuous, \(h_p(0) = 0\). The exogenous inputs are Lebesgue measurable and essentially bounded. Thus, a solution to the switched system \eqref{e:swsys} exists in the Carath\'{e}odory sense for some non-trivial time interval containing \(0\) \cite[Chapter 2]{Filippov}. 
    
    We will restrict our attention to elements of the set \(\S\) that obey pre-specified restrictions on admissible switches between the subsystems and admissible dwell times on the subsystems. Let \(0=:\tau_0<\tau_1<\cdots\) be the \emph{switching instants}, i.e., the points of discontinuity of \(\sigma\). We let \(E(\P)\subseteq\P\times\P\) denote the set of admissible switches, i.e., all ordered pairs \((p,q)\) such that a transition from subsystem \(p\) to subsystem \(q\) is admissible, \(p\neq q\). Let \(\delta_p\) and \(\Delta_p\) denote the admissible minimum and maximum dwell times on subsystem \(p\in\P\), respectively. 
    \begin{defn}
    \label{d:adm-sw}
        A switching signal, \(\sigma\in\S\), is \emph{admissible} if it satisfies the following:
            \(\bigl(\sigma(\tau_i),\sigma(\tau_{i+1})\bigr)\in E(\P)\) and \(\tau_{i+1}-\tau_{i}\in[\delta_{\sigma(\tau_i)},\Delta_{\sigma(\tau_i)}]\) for all \(i=0,1,2,\ldots\).
    \end{defn}
    
    Let \(\SR\) be the set of all admissible switching signals, \(\sigma\in\S\). In this paper we will study IOSS of the switched system \eqref{e:swsys} under all the elements of the set \(\SR\).
    \begin{defn}{\cite[Appendix A.6]{Liberzon}}
    \label{d:ioss}
        The switched system \eqref{e:swsys} is input/output-to-state stable (IOSS) under a switching signal \(\sigma\in\S\) if there exist class \(\Kinfty\) functions \(\alpha\), \(\chi_1\), \(\chi_2\) and a class \(\KL\) function \(\beta\) such that for all inputs \(v\) and initial states \(x_0\), we have
        \begin{align}
        \label{e:ioss}
            \alpha(\norm{x(t)})\leq\beta(\norm{x_0},t)+\chi_1(\norm{v}_{[0,t]})+\chi_2(\norm{y}_{[0,t]})
        \end{align}
        for all \(t\geq 0\). If \(\chi_2\equiv0\), then \eqref{e:ioss} reduces to input-to-state stability (ISS) of \eqref{e:swsys} \cite[Appendix A.6]{Liberzon} and if also \(v\equiv 0\), then \eqref{e:ioss} reduces to global asymptotic stability (GAS) of \eqref{e:swsys} \cite[Appendix A.1]{Liberzon}.
    \end{defn}
    
    We will solve the following problem:
    \begin{prob}
    \label{prob:main}
        Find conditions on the family of systems \eqref{e:family}, the set of admissible switches, \(E(\P)\), and the set of admissible minimum and maximum dwell times, \(\delta_p\) and \(\Delta_p\) on the subsystems \(p\in\P\), such that the switched system \eqref{e:swsys} is IOSS under all admissible switching signals, \(\sigma\in\SR\). 
    \end{prob}
    
    We will employ multiple Lyapunov-like functions \cite{Branicky_1998} and graph theory as the main apparatuses for our analysis. Prior to presenting our solution to \ref{prob:main}, we catalog a set of preliminaries.
\section{Preliminaries}
\label{s:prelims}
\subsection{Family of systems \eqref{e:family}}
\label{ss:family}
    Let \(\P_S\) and \(\P_U\) denote the set of indices of IOSS and non-IOSS subsystems, respectively. \(\P=\P_S\sqcup\P_U\).
    \begin{assump}
    \label{assump:key1}
        There exist class \(\Kinfty\) functions \(\underline{\alpha}\), \(\overline{\alpha}\), \(\gamma_1\), \(\gamma_2\), continuously differentiable functions, \(V_p:\R^d\to[0,+\infty[\), \(p\in\P\), and constants \(\lambda_p\in\R\) with \(\lambda_p>0\) for \(p\in\P_S\) and \(\lambda_p<0\) for \(p\in\P_U\), such that for all \(\xi\in\R^d\) and \(\eta\in\R^m\), we have
        \begin{align}
            \label{e:key1}\underline\alpha(\norm{\xi})&\leq V_p(\xi) \leq \overline\alpha(\norm{\xi}),
            \intertext{and}
            \label{e:key2}\left<\frac{\partial V_p}{\partial\xi}(\xi),f_p(\xi,\eta)\right>&\leq-\lambda_p V_p(\xi)+\gamma_1(\norm{\eta})+\gamma_2(\norm{h_p(\xi)}).
        \end{align}
    \end{assump}
    \begin{assump}
    \label{assump:key2}
        For each \((p,q)\in E(\P)\) there exist \(\mu_{pq}\geq 1\) such that the IOSS-Lyapunov-like functions are related as follows:
        \begin{align}
        \label{e:key3}
            V_q(\xi)\leq \mu_{pq}V_p(\xi)\:\text{for all}\:\xi\in\R^d.
        \end{align} 
    \end{assump}
    
    The functions, \(V_p\), \(p\in\P\), are called the IOSS-Lyapunov-like functions. Conditions \eqref{e:key2} is equivalent to the IOSS property for IOSS subsystems \cite{Krichman_2001, Sontag-Wang_1995} and the unboundedness observability property for the non-IOSS subsystems \cite{Angeli-Sontag_1999, Sontag-Wang_1995}. The scalars \(\lambda_p\), \(p\in\P\), provide a quantitative measure of (in)stability of the systems \eqref{e:family}. Condition \eqref{e:key3} implies that the IOSS-Lyapunov-like functions, \(V_p\), \(p\in\P\), are linearly comparable. The scalars, \(\mu_{pq}\), \((p,q)\in E(\P)\), gives an estimate of this comparability.
\subsection{Switching signals, \(\sigma\)}
\label{ss:swsig}
    Fix an interval \(]s,t]\subseteq[0,+\infty[\) of time. We let \(\Nsw(s,t)\) denote the \emph{total number of switches} on \(]s,t]\). Let \(\Tsw_p(s,t)\) and \(\Nsw_{pq}(s,t)\) denote the \emph{total duration of activation of subsystem \(p\in\P\)} and the \emph{total number of switches from subsystem \(p\) to subsystem \(q\), \((p,q)\in E(\P)\)} on \(]s,t]\), respectively. Clearly, \(\displaystyle{t-s=\sum_{p\in\P}\Tsw_p(s,t)}\) and \(\displaystyle{\Nsw(s,t)=\sum_{(p,q)\in E(\P)}\Nsw_{pq}(s,t)}\). 
\subsection{Weighted directed graph and switched system \eqref{e:swsys}}
\label{ss":graph}
    We associate a weighted directed graph, \(\G(\P,E(\P))\), with the family of systems \eqref{e:family} and the set of admissible switches between the subsystems, \(E(\P)\), as follows:
    \begin{itemize}[label=\(\circ\),leftmargin=*]
        \item The index set, \(\P\), is the set of vertices.
        \item The set of edges, \(E(\P)\), consists of a directed edge from vertex \(p\) to vertex \(q\) whenever a switch from subsystem \(p\) to subsystem \(q\) is admissible.
        \item The vertex weights are: \(w(p)=-\abs{\lambda_p}\), \(p\in\P_S\) and \(w(p)=\abs{\lambda_p}\), \(p\in\P_U\), where \(\lambda_p\), \(p\in\P_U\) are as described in Assumption \ref{assump:key1}.
        \item The edge weights are: \(w(p,q) = \ln\mu_{pq}\), \((p,q)\in E(\P)\), where \(\mu_{pq}\), \((p,q)\in E(\P)\) are as described in Assumption \ref{assump:key2}.
    \end{itemize}
    
    In the sequel we will call \(\G(\P,E(\P))\) as the underlying weighted directed graph of the switched system \eqref{e:swsys} and refer to it as \(\G\) whenever there is no risk of confusion. Also, we will switch freely between system theoretic and graph theoretic terminologies in the sense of the above description of \(\G\).
    
    \begin{defn}
    \label{d:walk}
        A \emph{walk} on \(\G\) is a sequence of vertices, \(W = v_0,v_1,\ldots,v_{n-1},v_{n}\), where \(v_i\in\P\), \(i=0,1,\ldots,n\) such that they admit edges \((v_i,v_{i+1})\in E(\P)\), \(i=0,1,\ldots,n-1\). 
    \end{defn}
    \begin{defn}
    \label{d:closed-walk}
        We call \(W= v_0,v_1,\ldots,v_{n-1},v_{n}\) a \emph{closed walk} if \(v_0=v_n\) and \(v_i\neq v_0=v_n\), \(i=1,2,\ldots,n-1\). 
    \end{defn}
    \begin{defn}
    \label{d:simple-walk}
        A walk \(W= v_0,v_1,\ldots,v_{n-1},v_{n}\) is called a \emph{simple walk} if \(v_0\neq v_n\) and \(v_i\), \(i=1,2,\ldots,n-1\) are distinct from each other and \(v_0\), \(v_n\).
    \end{defn}
    \begin{defn}
    \label{d:cycle}
        A closed walk \(W= v_0,v_1,\ldots,v_{n-1},v_{n}\) is called a \emph{cycle} if the vertices \(v_i\), \(i=1,2,\ldots,n-1\) are distinct from each other and \(v_0\).
    \end{defn}
    \begin{defn}
    \label{d:distinct}
        Two walks \(W_1\) and \(W_2\) on \(\G\) are \emph{distinct} if there is at least one vertex \(v\in\P\) that appears in exactly one of the walks. 
    \end{defn}
     For a walk \(W\), we let \(\V(W)\) denote the set of all vertices of \(\G\) that appear in \(W\), and \(\V^c(W)\) denote the set of all vertices of \(\G\) that do not appear in \(W\), i.e., \(\V^c(W) = \P\subset\V(W)\). In the sequel we will occasionally call \(W = v_0,v_1,\ldots,v_{n-1},v_n\), \(v_0\neq v_n\), as a walk (resp., simple walk) from vertex \(v_0\) to vertex \(v_n\) and denote it by \(W_{v_0\to v_n}\). We will call \(W = v_0,v_1,\ldots,v_n\), \(v_0=v_n\) as a closed walk (resp., cycle) on \(v_0\) and denote it by \(W_{v_0\to v_0}\). For a vertex \(v\in\P\), we let \(\C_v\) denote the set of all closed walks on \(v\). 
    \begin{defn}
    \label{d:length}
        The \emph{length of a walk} \(W\) is the number of vertices that appear in \(W\), counting repetitions. In particular, the length of \(W=v_0,v_1,\ldots,v_{n-1},v_n\) is \(n+1\).  By the term \emph{infinite walk}, we refer to a walk of infinite length, i.e., it has infinitely many vertices. For example, \(W = v_0,v_1,\ldots\) is an infinite walk. 
    \end{defn}
    \begin{defn}
    \label{d:subwalk}
        By the term \emph{subwalks} we refer to segments of a walk \(W\). In particular, for a finite walk \(W=v_0,v_1,\ldots,v_{n-1},v_n\), its subwalks are all segments \(v_k,v_{k+1},\ldots,v_{k+m}\) such that \(k\geq 0\) and \(k+m\leq n\), and for an infinite walk \(W=v_0,v_1,\ldots\), its subwalks are the segments \(v_k,v_{k+1},\ldots,v_{k+n}\), \(k\in\{0,1,2,\ldots\}\), \(n\in\{1,2,\ldots\}\). 
    \end{defn}
    \begin{defn}
    \label{d:sw-walk}
        An admissible switching signal \(\sigma\in\SR\) corresponds to an infinite walk \(W=\sigma(\tau_0),\sigma(\tau_1),\sigma(\tau_2),\ldots\) on \(\G\) and satisfies \(\tau_{i+1}-\tau_{i}\in[\delta_{\sigma(\tau_i)},\Delta_{\sigma(\tau_i)}]\), \(i=0,1,2,\ldots\).
    \end{defn}
    
    Infinite walks on \(\G\) whose subwalks have certain properties will be useful for us.
    \begin{defn}
    \label{d:contractive}
        A walk \(W = v_0,v_1,\ldots,v_n\) on \(\G\) is called \emph{contractive} if it satisfies
        \begin{align}
        \label{e:contractive}
            \Xi(W) := \sum_{i=0}^{n-1}w(v_i)D_{v_i}+\sum_{i=0}^{n-1}w(v_i,v_{i+1}) < 0
        \end{align}
        for all \(D_{v_i}\in[\delta_{v_i},\Delta_{v_i}]\), \(i=0,1,\ldots,n-1\).
    \end{defn}
    \begin{defn}
    \label{d:jointly-contractive}
        Two distinct walks \(W_1= v_0^{(1)},v_1^{(1)},\ldots,v_{n_1}^{(1)}\) and \(W_2= v_0^{(2)},v_1^{(2)},\ldots,v_{n_2}^{(2)}\) on \(\G\) with \(v_{n_1}^{(1)} = v_{0}^{(2)}\) are called \emph{jointly contractive} if they satisfy
        \begin{align}
        \label{e:joint-contractive}
            \Xi(W_1)+\Xi(W_2) &:= \sum_{i=0}^{n_1-1}w(v_i^{(1)})D_{v_i^{(1)}}+\sum_{i=0}^{n_1-1}w(v_i^{(1)},v_{i+1}^{(1)})\nonumber\\
            &\hspace*{-0.8cm}+\sum_{i=0}^{n_2-1}w(v_i^{(2)})D_{v_i^{(2)}}+\sum_{i=0}^{n_2-1}w(v_i^{(2)},v_{i+1}^{(2)}) < 0
        \end{align}
        for all \(D_{v_i^{(1)}}\in[\delta_{v_i^{(1)}},\Delta_{v_i^{(1)}}]\), \(i=0,1,\ldots,n_1-1\) and \(D_{v_i^{(2)}}\in[\delta_{v_i^{(2)}},\Delta_{v_i^{(2)}}]\), \(i=0,1,\ldots,n_2-1\).
    \end{defn}
    Notice that the concatenation of two jointly contractive walks is a contractive walk.
    
    \begin{remark}
    \label{rem:all-dwell}
    \rm{
        The association of all admissible dwell times on the vertices that appear in \(W\) with the contractivity (resp., joint contractivity) directly relates to the association of an admissible switching signal, \(\sigma\in\SR\), with an infinite walk on \(\G\). Intuitively, we require stability under all infinite walks \(\sigma(\tau_0),\sigma(\tau_1),\ldots\) with the distance between \(\tau_i\) and \(\tau_{i+1}\) being any choice in the range \([\delta_{\sigma(\tau_i)},\Delta_{\sigma(\tau_i)}]\), \(i=0,1,2,\ldots\). We shall see in Section \ref{s:proofs} that contractivity of certain subwalks of the infinite walks on \(\G\) corresponding to admissible switching signals will contribute to decays in the state trajectory of the switched system \eqref{e:swsys}. 
        }
    \end{remark}
    
    We are now in a position to present our results.
\section{Main results}
\label{s:mainres}
    The following theorem is our solution to Problem \ref{prob:main}:
    \begin{theorem}
    \label{t:mainres}
        Consider the underlying weighted directed graph, \(\G(\P,E(\P))\), of the switched system \eqref{e:swsys}. Suppose that for every \(v\in\P\) with \(\C_v\neq\emptyset\), the following conditions hold:
        \begin{enumerate}[label=C\arabic*), leftmargin = *]
            \item\label{condn1}Every closed walk on \(v\), \(W_{v\to v}^{(1)},W_{v\to v}^{(2)},\ldots,W_{v\to v}^{(r_v)}\) is contractive.
            \item\label{condn2}For every \(u\in\V^{c}_{W_{v\to v}^{(j)}}\), every simple walk \(W_{u\to v}^{(1)},W_{u\to v}^{(2)},\ldots,W_{u\to v}^{(m_{uv})}\) is jointly contractive with \(W_{v\to v}^{(j)}\), \(j=1,2,\ldots,r_v\).
        \end{enumerate}
        Then the switched system \eqref{e:swsys} is input/output-to-state stable (IOSS) under all admissible switching signals, \(\sigma\in\SR\).
    \end{theorem}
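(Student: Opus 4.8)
The plan is to track the IOSS-Lyapunov-like functions along the trajectory of \eqref{e:swsys} and to read the graph quantity $\Xi$ as a multiplicative decay factor on these functions. Fix an admissible $\sigma\in\SR$ with switching instants $0=\tau_0<\tau_1<\cdots$ and corresponding infinite walk $W=\sigma(\tau_0),\sigma(\tau_1),\ldots$ on $\G$. On $]\tau_i,\tau_{i+1}]$ the active subsystem is $p_i=\sigma(\tau_i)$, so \eqref{e:key2} and the comparison lemma give, for $t\in]\tau_i,\tau_{i+1}]$,
\[
V_{p_i}(x(t))\le e^{-\lambda_{p_i}(t-\tau_i)}V_{p_i}(x(\tau_i))+\int_{\tau_i}^{t}e^{-\lambda_{p_i}(t-s)}\bigl(\gamma_1(\norm{v(s)})+\gamma_2(\norm{h_{p_i}(x(s))})\bigr)\,\mathrm{d}s .
\]
Each switch contributes, via \eqref{e:key3}, the factor $\mu_{p_i p_{i+1}}=e^{w(p_i,p_{i+1})}$, while the free decay over a sojourn of length $D_{p_i}=\tau_{i+1}-\tau_i\in[\delta_{p_i},\Delta_{p_i}]$ contributes $e^{-\lambda_{p_i}D_{p_i}}=e^{w(p_i)D_{p_i}}$ since $w(p)=-\lambda_p$. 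Composing these over a subwalk $v_0,\ldots,v_n$ produces exactly the homogeneous multiplier $e^{\Xi(v_0,\ldots,v_n)}$, so that contractivity of a (closed) subwalk is precisely a net contraction of $V$ over the corresponding segment, modulo accumulated input/output integrals; this is the bridge between Definitions \ref{d:contractive}--\ref{d:jointly-contractive} and the dynamics.

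Next I would exploit finiteness of $\G$. Being an infinite walk on a finite graph, $W$ revisits some vertex $v$ (necessarily with $\C_v\neq\emptyset$) infinitely often, at instants $\tau_{k_1}<\tau_{k_2}<\cdots$; the segments $A_j$ between consecutive visits are closed walks on $v$, and the prefix up to $\tau_{k_1}$ is an approach to $v$. By C1 every $A_j$ is contractive, and reducing an arbitrary closed walk to its simple cycles — of which $\G$ has only finitely many — yields a \emph{uniform} bound $\Xi(A_j)\le-\varepsilon<0$ with $\varepsilon$ independent of $j$ and of $\sigma$ (the affine map $\Xi$ attains its maximum over the compact box $\prod[\delta_{v_i},\Delta_{v_i}]$, and that maximum is strictly negative on each cycle). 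The approach is reduced to a simple walk from $\sigma(\tau_0)$ to $v$ together with excised contractive cycles (contractive again by C1), and pairing the simple approach with the first sojourn $A_1$ via C2 (joint contractivity) shows the transient contributes at most a uniformly bounded multiplier $e^{M}$ rather than accumulating growth. Iterating the per-segment estimate then gives $V_v(x(\tau_{k_j}))\le e^{M}e^{-(j-1)\varepsilon}\,\overline\alpha(\norm{x_0})+(\text{damped input/output terms})$.

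The remaining, and I expect principal, obstacle is converting this decay at the visit instants $\tau_{k_j}$ into a genuine $\KL$ estimate valid for every $t\ge0$, since a single sojourn $A_j$ may be long. For this I would bound intra-sojourn growth: any prefix of a closed walk reduces to a simple walk (uniformly bounded $\Xi\le M'$) plus contractive cycles, so $V_{\sigma(t)}(x(t))\le e^{M'}V_v(x(\tau_{k_j}))+(\text{i/o})$ on the $j$th sojourn, and each completed interior cycle occupies at most $N\max_{p\in\P}\Delta_p$ units of time while contributing a factor $e^{-\varepsilon}$, so the number of completed contractive cycles up to time $t$ grows at least linearly in $t$. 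Combining the three estimates produces a uniform bound of the form $\underline\alpha(\norm{x(t)})\le V_{\sigma(t)}(x(t))\le K e^{-ct}\,\overline\alpha(\norm{x_0})+C\bigl(\gamma_1(\norm{v}_{[0,t]})+\gamma_2(\norm{y}_{[0,t]})\bigr)$, where the input/output contribution is the sum of per-interval integrals damped by the subsequent contraction factors and is estimated as a convergent geometric series (splitting the cases $\lambda_p>0$ and $\lambda_p<0$ to bound each integral). Finally, applying $\underline\alpha^{-1}$ together with the weak triangle inequality $\underline\alpha^{-1}(a+b)\le\underline\alpha^{-1}(2a)+\underline\alpha^{-1}(2b)$ yields the $\KL$ function $\beta$ and the $\Kinfty$ functions $\chi_1,\chi_2$ of Definition \ref{d:ioss}. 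Throughout, the delicate point is the uniformity of $\varepsilon$, $M$, $M'$, $c$, $K$, $C$ over the entire class $\SR$, and the bookkeeping that converts switch-indexed contraction into continuous-time decay.
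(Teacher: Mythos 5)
Your proposal is correct and follows essentially the same route as the paper: iterate the IOSS-Lyapunov-like estimates so that the homogeneous multiplier over a traversed walk is exactly \(e^{\overline\Xi}\), use conditions \ref{condn1} and \ref{condn2} to decompose the infinite walk corresponding to \(\sigma\) into an absorbed transient plus contractive closed-walk segments (the paper's Lemmas \ref{lem:auxres1}--\ref{lem:auxres2}), and bound the accumulated input/output contributions by a convergent geometric series. Your explicit excision of closed walks into the finitely many cycles of \(\G\) to extract a uniform contraction rate \(\varepsilon\) over the compact dwell-time boxes, and the resulting linear-in-\(t\) count of completed cycles used to pass from switch-indexed to continuous-time decay, is if anything more careful than the paper's own write-up (which posits a maximal length \(\overline\ell\) of non-contractive walks and a piecewise-linear class \(\mathcal{L}\) envelope without spelling out these uniformity points), and essentially reproduces the content of Proposition \ref{prop:mainres2} inside the proof of the theorem.
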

    
    Theorem \ref{t:mainres} relies on the properties of the underlying weighted directed graph, \(\G\), of the switched system \eqref{e:swsys} towards guaranteeing IOSS of \eqref{e:swsys} under all switching signals that obey the given restrictions on admissible switches between the subsystems and admissible dwell times on the subsystems. In particular, we require (i) all closed walks \(W_{v\to v}\) on \(\G\) to be contractive and (ii) for each closed walk \(W_{v\to v}\), the simple walks \(W_{u\to v}\) such that \(u\) does not appear in \(W_{v\to v}\), to be jointly contractive with \(W_{v\to v}\). In our solution to Problem \ref{prob:main} given by Theorem \ref{t:mainres}, properties of the subsystems are captured by the scalars, \(\lambda_p\), \(p\in\P\), \(\mu_{pq}\), \((p,q)\in E(\P)\) computed from their corresponding IOSS-Lyapunov-like functions, \(V_p\), \(p\in\P\) and appear as vertex and edge weights of \(\G\), the admissible switches between the subsystems, \(E(\P)\), are captured in admissible walks on \(\G\), and the admissible dwell times on the subsystems, \(\delta_p\), \(\Delta_p\), \(p\in\P\), are captured by weighing factors of the vertex weights in the definition of contractive (resp., jointly contractive) walks. A proof of Theorem \ref{t:mainres} is presented in Section \ref{s:proofs}. 
    If \(\chi_2\equiv 0\), then our result caters to ISS of \eqref{e:swsys} and if also \(v\equiv 0\), then it caters to GAS of \eqref{e:swsys}.
    
    \begin{remark}
    \label{rem:compa1}
    \rm{
        We employed the weighted directed graph, \(\G\), of the switched system \eqref{e:swsys} for the study of IOSS earlier in \cite{abc}. The specific purpose was to design \emph{a} stabilizing periodic switching signal algorithmically. Such a signal was generated by repeating a cycle \(W\) on \(\G\) such that \(W\) satisfies condition \eqref{e:contractive} with an admissible choice of \(D_v\), \(v\in \V(W)\). In the current work we employ \(\G\) for the purpose of ensuring IOSS of \eqref{e:swsys} under all switching signals that obey the given restrictions. Instead of restricting our attention to cycles that obey \eqref{e:contractive} with an admissible choice of \(D_v\), \(v\in \V(W)\), we work with finite walks on \(\G\) that obey \eqref{e:contractive} for \emph{all} admissible choices of \(D_v\), \(v\in \V(W)\). Naturally the use of \(\G\) is generalized in the current work. In particular, our proof of Theorem \ref{t:mainres} to be presented in Section \ref{s:proofs} involves graph theoretic analysis of IOSS of switched system \eqref{e:swsys} beyond periodic switching signals.
        }
    \end{remark}
    
    \begin{remark}
    \label{rem:feature1}
    \rm{
        Notice that by definition of contractive (resp., jointly contractive) walks, we require a sum of edge weights and weighted vertex weights to be negative. The negative values are contributed only by the weights of the IOSS vertices. Thus, our results do not cater to set of admissible switches that allows infinite walks with all unstable vertices.
        }
    \end{remark}
    
    A next natural question is: given the underlying weighted directed graph, \(\G\), of the switched system \eqref{e:swsys}, how do we verify if conditions \ref{condn1} and \ref{condn2} hold? Let us note the difficulties associated to checking these conditions numerically. First, we not have upper bounds on the length of the closed walks. Indeed, multiple vertices can be repeated multiple times in between. Second, by definition of contractive (resp., jointly contractive) walks, certain condition needs to be satisfied for all choices of admissible dwell times on the respective vertices. We provide a set of sufficient conditions under which conditions \ref{condn1} and \ref{condn2} on \(\G\) can be verified numerically.
    \begin{proposition}
    \label{prop:mainres1}
        Consider the underlying weighted directed graph, \(\G(\P,E(\P))\), of the switched system \eqref{e:swsys}. Suppose that the walk \(W = v_0,v_1,\ldots,v_{n-1},v_n\) satisfies
        \begin{align}
        \label{e:aux_condn1}
            \sum_{\substack{i=0\\v_i\in\P_S}}^{n-1}w(v_i)\delta_{v_i} + \sum_{\substack{i=0\\v_i\in\P_U}}^{n-1}w(v_i)\Delta_{v_i} + \sum_{i=0}^{n-1}w(v_i,v_{i+1}) < 0.
        \end{align}
        Then \(W\) is contractive.
    \end{proposition}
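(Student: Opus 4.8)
The plan is to recognize \eqref{e:aux_condn1} as precisely the statement that the \emph{worst-case} value of \(\Xi(W)\) over all admissible dwell times is negative, and then to observe that contractivity---which demands \(\Xi(W)<0\) for \emph{every} admissible choice---follows at once. First I would fix the walk \(W=v_0,v_1,\ldots,v_n\) and view \(\Xi(W)=\sum_{i=0}^{n-1}w(v_i)D_{v_i}+\sum_{i=0}^{n-1}w(v_i,v_{i+1})\) as a function of the dwell-time vector \((D_{v_0},\ldots,D_{v_{n-1}})\) ranging over the box \(\prod_{i=0}^{n-1}[\delta_{v_i},\Delta_{v_i}]\). The edge-weight sum is a constant independent of the \(D_{v_i}\), and the remaining term \(\sum_{i=0}^{n-1}w(v_i)D_{v_i}\) is a separable affine function of the dwell times, so maximizing \(\Xi(W)\) over the box reduces to maximizing each summand \(w(v_i)D_{v_i}\) independently.

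Next I would carry out the one-variable maximization at each position \(i\). Since \(w(v_i)D_{v_i}\) is monotone in \(D_{v_i}\), its maximum over \([\delta_{v_i},\Delta_{v_i}]\) is attained at \(D_{v_i}=\Delta_{v_i}\) when \(w(v_i)>0\) and at \(D_{v_i}=\delta_{v_i}\) when \(w(v_i)<0\) (the boundary case \(w(v_i)=0\) does not arise, as \(\lambda_{v_i}\neq 0\) by Assumption \ref{assump:key1}). By the vertex-weight convention of \(\G\), we have \(w(v_i)=\abs{\lambda_{v_i}}>0\) exactly when \(v_i\in\P_U\) and \(w(v_i)=-\abs{\lambda_{v_i}}<0\) exactly when \(v_i\in\P_S\). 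Substituting these maximizers shows that
\[
\max_{D}\Xi(W)=\sum_{\substack{i=0\\v_i\in\P_S}}^{n-1}w(v_i)\delta_{v_i}+\sum_{\substack{i=0\\v_i\in\P_U}}^{n-1}w(v_i)\Delta_{v_i}+\sum_{i=0}^{n-1}w(v_i,v_{i+1}),
\]
i.e., the left-hand side of \eqref{e:aux_condn1} is exactly this maximum. Hence if \eqref{e:aux_condn1} holds, then \(\Xi(W)\le\max_{D}\Xi(W)<0\) for every admissible \(D_{v_i}\in[\delta_{v_i},\Delta_{v_i}]\), \(i=0,1,\ldots,n-1\), which is precisely the definition of \(W\) being contractive.

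The argument is short, so there is no deep obstacle; the only point needing care is that each position \(i\) of the walk carries its own independent variable \(D_{v_i}\) even when the same vertex recurs several times in \(W\). Because the feasible set is a product of intervals indexed by position and \(\Xi(W)\) is separable across these positions, optimizing position-by-position is legitimate and no coupling between repeated occurrences arises. I would also state the sign bookkeeping explicitly---that the worst case pairs each stable vertex (\(w<0\)) with its \emph{minimum} dwell time \(\delta\) and each unstable vertex (\(w>0\)) with its \emph{maximum} dwell time \(\Delta\)---since a reversed convention is the only plausible source of error.
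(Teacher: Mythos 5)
Your proposal is correct and follows essentially the same route as the paper: both arguments bound each term \(w(v_i)D_{v_i}\) by its worst case over \([\delta_{v_i},\Delta_{v_i}]\) according to the sign of the vertex weight (\(\delta_{v_i}\) for \(v_i\in\P_S\), \(\Delta_{v_i}\) for \(v_i\in\P_U\)), so that the left-hand side of \eqref{e:aux_condn1} upper-bounds \(\Xi(W)\) for every admissible dwell-time choice. Your framing as maximizing a separable affine function over a box is merely a cleaner packaging of the same termwise estimate.
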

    
    The above proposition asserts that to determine contractivity of a walk \(W = v_0,v_1,\ldots,v_{n-1},v_n\), it suffices to check if condition \eqref{e:contractive} holds for \(D_{v_i} = \delta_{v_i}\), if \(v_i\in\P_S\) and \(D_{v_i}=\Delta_{v_i}\), if \(v_i\in\P_U\), \(i=0,1,\ldots,n-1\) and a check for all possible choices of \(D_{v_i}\in[\delta_{v_i},\Delta_{v_i}]\), \(i=0,1,\ldots,n-1\) is not necessary. In particular, given \(\G\), the check for satisfaction of \eqref{e:aux_condn1} can be addressed as a negative weighted walk detection problem. Consider an edge weighted directed graph \(\overline{\G}(\overline{\P},E(\overline{\P}))\) with \(\overline{\P} = \P\), \(E(\overline{\P}) = E(\P)\), and edge weights \(\overline{w}(u,v) = w(u,v) + w(u)\delta_u\), if \(u\in\P_S\) and \(\overline{w}(u,v) = w(u,v) + w(u)\Delta_u\), if \(u\in\P_U\), \((u,v)\in E(\overline{\P})\). Recall that for a negative weighted walk, the sum of edge weights is strictly less than zero. It follows that finding a negative weighted walk on \(\overline{\G}\) is the same as finding a walk on \(\G\) that satisfies condition \eqref{e:aux_condn1}. Intuitively, Proposition \ref{prop:mainres1} can be interpreted as follows: if the positive component of a function resulting from dwelling on unstable subsystems for the maximum admissible duration of time and switching between the subsystems can be contradicted by negative component of the function resulting from dwelling on stable subsystems for the minimum admissible duration of time, then the positive component of the function resulting from dwelling on unstable subsystems for any admissible duration of time and switching between the subsystems can be contradicted the negative component of the function resulting from dwelling on stable subsystems for any admissible duration of time. 
    
    Our next result will solve the problem of unbounded length of the closed walks under consideration.
    \begin{proposition}
    \label{prop:mainres2}
         Consider the underlying weighted directed graph, \(\G(\P,E(\P))\), of the switched system \eqref{e:swsys}. Suppose that for all \(v\in\P\) such that \(\C_v\neq\emptyset\), all cycles \(W_{v\to v}\) are contractive. Then all closed walks \(W_{v\to v}\) are contractive.
    \end{proposition}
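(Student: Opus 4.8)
The plan is to prove the statement by induction on the length of the closed walk, using the fact that any closed walk which is not already a cycle can be peeled into a cycle plus a strictly shorter closed walk on the same vertex, together with the additivity of the functional $\Xi$ under this decomposition.

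First I would fix $v\in\P$ with $\C_v\neq\emptyset$ and an arbitrary closed walk $W_{v\to v}=v_0,v_1,\ldots,v_n$ (so $v_0=v_n=v$ and $v_i\neq v$ for $1\le i\le n-1$ by Definition \ref{d:closed-walk}), and fix an arbitrary admissible choice of dwell times $D_{v_i}\in[\delta_{v_i},\Delta_{v_i}]$, $i=0,1,\ldots,n-1$; it suffices to show $\Xi(W_{v\to v})<0$ for this choice. If $W_{v\to v}$ is a cycle, then it is contractive by hypothesis and we are done. Otherwise the intermediate vertices $v_1,\ldots,v_{n-1}$ are not all distinct, so I would take the smallest index $k\in\{2,\ldots,n-1\}$ for which $v_k=v_j$ for some $1\le j<k$. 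By minimality of $k$ the vertices $v_1,\ldots,v_{k-1}$ are pairwise distinct, hence $C:=v_j,v_{j+1},\ldots,v_k$ is a cycle on $v_j$ (its interior vertices $v_{j+1},\ldots,v_{k-1}$ are distinct and differ from $v_j=v_k$), and since $C$ itself witnesses $\C_{v_j}\neq\emptyset$, the hypothesis gives that $C$ is contractive. Splicing $C$ out, using $v_j=v_k$ so that the edge $(v_j,v_{k+1})=(v_k,v_{k+1})\in E(\P)$ remains available, yields $W':=v_0,\ldots,v_j,v_{k+1},\ldots,v_n$, which is again a closed walk on $v$ of strictly smaller length.

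The key bookkeeping step is the additivity $\Xi(W_{v\to v})=\Xi(C)+\Xi(W')$. This holds because $\Xi$ is naturally a sum over the edges of the walk, each traversed edge $(v_i,v_{i+1})$ contributing $w(v_i)D_{v_i}+w(v_i,v_{i+1})$; the edges of $W_{v\to v}$ are partitioned into those of $C$ (the positions $j,\ldots,k-1$) and those of $W'$ (the remaining positions, where the spliced edge $(v_j,v_{k+1})$ carries the dwell time $D_{v_k}$). Here it is essential that the definition of contractivity allows an independent dwell time for each occurrence of a vertex, so that the inherited dwell times for $C$ and $W'$ are each admissible. By the induction hypothesis $W'$ is contractive, so $\Xi(W')<0$, while $\Xi(C)<0$ from above; therefore $\Xi(W_{v\to v})<0$. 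As the dwell-time choice was arbitrary, $W_{v\to v}$ is contractive, which closes the induction.

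The main obstacle I anticipate is bookkeeping rather than conceptual: one must check carefully that splicing out $C$ leaves a walk that is still admissible on $\G$ and still a closed walk on $v$ in the sense of Definition \ref{d:closed-walk}, and that the per-occurrence dwell times reassemble exactly into $\Xi(C)+\Xi(W')$ with every value remaining in its admissible range $[\delta_{v_i},\Delta_{v_i}]$. Once the reindexing is set up cleanly — for instance by viewing $\Xi$ as a sum indexed by traversed edges rather than by positions — the additivity and hence the induction become immediate.
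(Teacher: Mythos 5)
Your proof is correct and follows essentially the same route as the paper's: both arguments rest on decomposing a closed walk that is not a cycle into cycles and on the additivity of \(\Xi\) over the pieces for any fixed admissible dwell-time assignment. Your packaging as an induction on length, peeling off one innermost cycle \(C=v_j,\ldots,v_k\) at a time, is a cleaner version of the paper's multi-level first/last-occurrence decomposition, and your explicit check that the spliced edge \((v_j,v_{k+1})\) inherits the admissible dwell time \(D_{v_k}\) (so that \(\Xi(W)=\Xi(C)+\Xi(W')\) with all dwell times in range) is a detail the paper passes over.
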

    Armed with Propositions \ref{prop:mainres1} and \ref{prop:mainres2}, we can verify conditions \ref{condn1} and \ref{condn2} by checking if \eqref{e:aux_condn1} holds for simple walks and cycles on a given \(\G\). We provide proofs of Propositions \ref{prop:mainres1} and \ref{prop:mainres2} in Section \ref{s:proofs}.
    
    We now present a numerical example to demonstrate the proposed results.
\section{Numerical example}
\label{s:numex}
    We consider a family of systems \eqref{e:family} with \(\P = \{1,2,3\}\), where
    \begin{align*}
        f_1(x,v) &= \pmat{-2x_1+\sin(x_1-x_2)\\-2x_2+\sin(x_2-x_1)+0.5v},\:\:h_1(x) = x_1-x_2,\\
        f_2(x,v) &= \pmat{0.5x_2+0.25\abs{x_1}\\\text{sat}(x_1)+\frac{1}{2}v},\:\:h_2(x) = \abs{x_1},\\
        f_3(x,v) &= \pmat{0.2x_1+0.1x_2\\0.3x_1+v},\:\:h_3(x) = x_1,
    \end{align*}
    where \(\text{sat}(x_1) = \min\{1,\max\{-1,x_1\}\}\). Clearly, \(\P_S = \{1\}\) and \(\P_U=\{2,3\}\). Let the admissible switches be \(E(\P) = \{(1,2),(1,3),(2,1),(3,1)\}\) and the admissible minimum and maximum dwell times be \(\delta_1=\delta_2=\delta_3=3.5\) and \(\Delta_1=\Delta_2=\Delta_3 = 4\).
    
    Let \(V_1(x) = V_2(x) = \frac{1}{2}(x_1^2+x_2^2)\), \(V_3(x) = x_1^2+x_2^2\). We compute \(w(1) = -\abs{\lambda_1}=-3.5\), \(w(2) = \abs{\lambda_2} = w(3) = \abs{\lambda_3} = 0.73\), \(w(1,2) = \ln\mu_{12} = w(2,1) = \ln\mu_{21} = w(3,1) = \ln\mu_{31} = 0\), \(w(1,3) = \ln\mu_{13} = 0.6931\). The underlying weighted directed graph, \(\G\), of the switched system \eqref{e:swsys} is shown in Figure \ref{fig:graph}.
    \begin{figure}
    \begin{center}
            \scalebox{1}{
        \begin{tikzpicture}[every path/.style={>=latex},base node/.style={draw,circle}]
            \node[base node]            (a) at (-1.5,0)  { 1 };
            \node[base node]            (b) at (1.5,0)  { 2 };
            \node[base node]            (c) at (-1.5,-2) { 3 };
            \node (e) at (2.2,0) {$w(2)$};
            \node (f)  at (-2.2,0) {$w(1)$};
            \node (h) at (-2.2,-2) {$w(3)$};
            \node (i)  at  (0,-0.3) {$w(1,2)$};
            \node (j)  at  (0,0.8) {$w(2,1)$};
            \node[rotate=-90] (n) at  (-2.19,-1) {$w(3,1)$};
             \node[rotate=-90] (o) at  (-1.25,-1) {$w(1,3)$};
	   \draw[->] (a) edge (b);
            \draw[->] (b) edge[bend right] (a);

             \draw[->] (c) edge[bend left] (a);

            \draw[->] (a) edge (c);
        \end{tikzpicture}}
    \end{center}
    \caption{Underlying weighted directed graph, \(\G\), for the switched system \eqref{e:swsys}.}\label{fig:graph}
    \end{figure}
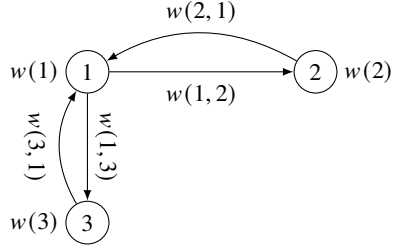
    
    We have that \(1,2,3\) are the vertices of \(\G\) such that \(\C_1=\{W_{1\to 1}^{(1)},W_{1\to 1}^{(2)}\}\neq\emptyset\), \(\C_2 = \{W_{2\to 2}\}\neq\emptyset\), \(\C_3 = \{W_{3\to 3}\}\neq\emptyset\), where \(W_{1\to 1}^{(1)} = 1,2,1\), \(W_{1\to 1}^{(2)} = 1,3,1\), \(W_{2\to 2} = 2,1,2\) and \(W_{3\to 3} = 3,1,3\). We note that \(W_{1\to 1}^{(1)},W_{1\to 1}^{(2)},W_{2\to 2},W_{3\to 3}\) satisfy condition \eqref{e:aux_condn1}. In particular, 
    \begin{itemize}[label=\(\circ\),leftmargin=*]
        \item for \(W_{1\to 1}^{(1)}\): \(w(1)\delta_1+w(2)\Delta_2+w(1,2)+w(2,1)=-9.33<0\),
        \item for \(W_{1\to 1}^{(2)}\): \(w(1)\delta_1+w(3)\Delta_3+w(1,3)+w(3,1)=-8.64<0\),
        \item for \(W_{2\to 2}\): \(w(2)\Delta_2+w(1)\delta_1+w(2,1)+w(1,2)=-9.33<0\),
        \item for \(W_{3\to 3}\): \(w(3)\Delta_3+w(1)\delta_1+w(3,1)+w(1,3)=-8.64<0\).
    \end{itemize} 
    Further, the simple walks \(W_1 = 2,1\) and \(W_2 = 3,1\) are jointly contractive with \(W_{1\to 1}^{(2)}\) and \(W_{1\to 1}^{(1)}\), respectively. Indeed,
    \begin{itemize}[label=\(\circ\),leftmargin=*]
        \item for \(W' = 2,1,3,1\): \(w(2)\Delta_2+w(1)\delta_1+w(3)\Delta_3+w(2,1)+w(1,3)+w(3,1) = -5.677<0\),
        \item for \(W'' = 3,1,2,1\): \(w(3)\Delta_3+w(1)\delta_1+w(2)\Delta_2+w(3,1)+w(1,2)+w(2,1) = -6.41<0\).
    \end{itemize}
    Thus, we have verified conditions \ref{condn1} and \ref{condn2} by using the assertions of Propositions \ref{prop:mainres1} and \ref{prop:mainres2}.
    
    We generate \(10\) different infinite walks on \(\G\) randomly and focus on their corresponding switching signals that obey the given restrictions on dwell times on the subsystems. We pick initial conditions \(x_0\in[-1,+1]\) and exogenous inputs \(v\in[-0.5,+0.5]\) uniformly at random. The state trajectories, \(\biggl(\norm{x(t)}\biggr)_{t\in[0,+\infty[}\), of the switched system \eqref{e:swsys} under these switching signals are illustrated up to time \(t=15\) units of time in Figure \ref{fig:traj}. IOSS of the switched system \eqref{e:swsys} follows.
    \begin{figure}
    \begin{center}
        \includegraphics[scale=0.5]{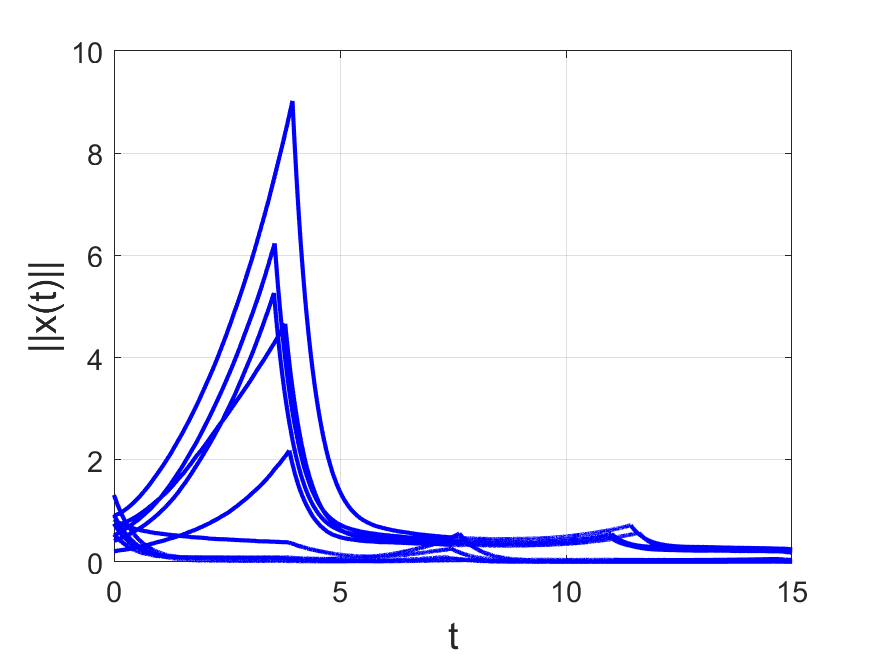}
    \end{center}
    \caption{State trajectories, \(\biggl(\norm{x(t)}\biggr)_{t\in[0,+\infty[}\), of the switched system \eqref{e:swsys}.}\label{fig:traj}
    \end{figure} 
\section{Concluding remarks}
\label{s:concln}
    In this paper we studied IOSS of continuous-time switched nonlinear systems whose switching signals obey pre-specified restrictions on admissible switches between the subsystems and admissible dwell times on the subsystems. We presented sufficient conditions on the subsystems dynamics, the set of admissible switches between the subsystems and the admissible minimum and maximum dwell times on the subsystems under which all the switching signals obeying the given restrictions are stabilizing. Our results do not cater to switching signals that dwell only on unstable subsystems. We identify the problem of IOSS under all switching signals that obey pre-specified restrictions and activate only unstable subsystems as a direction of future work.
\section{Proofs of our results}
\label{s:proofs}
    {\it Proof of Theorem \ref{t:mainres}}:
    Fix \(\sigma\in\SR\) arbitrary. Let \(\overline{W} = \overline{v}_0,\overline{v}_1,\ldots\) be the infinite walk on \(\G\) corresponding to \(\sigma\). The following set of auxiliary results will be useful in the proofs of our results.
    \begin{lemma}
    \label{lem:auxres1}
        Suppose that conditions \ref{condn1} and \ref{condn2} hold. Then \(\overline{W}\) is a concatenation of contractive walks.
    \end{lemma}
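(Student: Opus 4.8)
The plan is to use the finiteness of \(\P\) to find a vertex that \(\overline{W}\) visits infinitely often and then to carve \(\overline{W}\) into contractive blocks around the returns to that vertex. Since \(\overline{W}=\overline{v}_0,\overline{v}_1,\ldots\) has infinitely many entries taken from the finite set \(\P\), the pigeonhole principle yields a vertex \(v^\ast\) occurring infinitely often. Writing \(i_0<i_1<i_2<\cdots\) for the indices with \(\overline{v}_{i_k}=v^\ast\), each segment \(S_k:=\overline{v}_{i_k},\ldots,\overline{v}_{i_{k+1}}\) has no interior occurrence of \(v^\ast\) and is therefore a closed walk on \(v^\ast\) in the sense of Definition \ref{d:closed-walk}. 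As \(v^\ast\) recurs we have \(\C_{v^\ast}\neq\emptyset\), so condition \ref{condn1} makes every \(S_k\) contractive; the tail \(\overline{v}_{i_0},\overline{v}_{i_0+1},\ldots\) is thus already a concatenation of contractive walks, and only the prefix \(P:=\overline{v}_0,\ldots,\overline{v}_{i_0}\) remains (it is empty if \(\overline{v}_0=v^\ast\)).

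The prefix \(P\) is a finite walk from \(\overline{v}_0\) into \(v^\ast\) and need not be contractive by itself. I would run a \emph{last-exit loop erasure} on it: scanning \(P\) and, at each vertex, jumping to its final occurrence before stepping forward, I extract a simple walk \(Q=q_0,q_1,\ldots,q_l\) with \(q_0=\overline{v}_0\), \(q_l=v^\ast\), together with excursions \(L_0,L_1,\ldots,L_{l-1}\), where \(L_j\) is a walk from \(q_j\) back to \(q_j\) and \(P=L_0,e_0,L_1,e_1,\ldots,L_{l-1},e_{l-1}\) with \(e_j\) the edge \(q_j\to q_{j+1}\). The decisive property of this erasure is that after the last departure from \(q_{j-1}\) the walk never returns, so \(q_{j-1}\notin\V(L_j)\). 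Each \(L_j\), being a walk from \(q_j\) to \(q_j\), splits at its interior occurrences of \(q_j\) into genuine closed walks on \(q_j\), each contractive by \ref{condn1}; hence every \(L_j\) is a concatenation of contractive walks, and \(L_0\) in particular is handled outright.

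It remains to absorb the bare edges \(e_j\) into adjacent contractive blocks, and this is where condition \ref{condn2} enters. For \(j=1,\ldots,l-1\) I group \(e_{j-1}\) with the first closed-walk piece of the following excursion \(L_j\): the single edge \(q_{j-1},q_j\) is a simple walk into \(q_j\) whose source obeys \(q_{j-1}\notin\V(L_j)\), so by condition \ref{condn2} it is jointly contractive with that closed walk, and the concatenation is contractive by the remark following Definition \ref{d:jointly-contractive}; the remaining pieces of \(L_j\) are contractive by \ref{condn1}. This disposes of \(L_0,e_0,L_1,\ldots,L_{l-1}\) and leaves only the terminal edge \(e_{l-1}=q_{l-1},v^\ast\) feeding into the recurrent vertex.

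The terminal junction is the step I expect to be the main obstacle. The natural move is to attach \(e_{l-1}\) to the head of the tail: by condition \ref{condn2} the simple walk \(q_{l-1},v^\ast\) together with the closed walk \(S_0\) is contractive \emph{provided} \(q_{l-1}\notin\V(S_0)\), and it is exactly the failure of this proviso that must be dealt with. I would resolve it by a counting argument: \(\Xi\) is additive under concatenation, and because every closed walk decomposes into cycles (Proposition \ref{prop:mainres2}) and \(\G\) has only finitely many cycles, each contractive by \ref{condn1}, there is a uniform \(c>0\) with \(\Xi(S_k)\le -c\) for all \(k\) at all admissible dwell times. Since the single edge \(q_{l-1},v^\ast\) contributes a bounded amount to \(\Xi\), concatenating it with sufficiently many tail blocks \(S_0,\ldots,S_K\) produces a contractive walk, after which \(S_{K+1},S_{K+2},\ldots\) continue the decomposition. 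Reading the blocks \(L_0\), the groups \(e_{j-1}L_j\), the terminal block, and the surviving \(S_k\) in their original order exhibits \(\overline{W}\) as a concatenation of contractive walks.
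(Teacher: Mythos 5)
Your argument is correct in substance but reaches the conclusion by a genuinely different route from the paper's, so a comparison is worthwhile. The paper cuts \(\overline{W}\) at the \emph{first} occurrence of the first vertex that repeats, takes the block up to that vertex's last occurrence as a concatenation of closed walks (condition \ref{condn1}), and iterates on the tail; the payoff of always cutting at first occurrences is that every connecting piece is automatically a simple walk whose initial vertex never reappears afterwards, so the hypothesis \(u\in\V^{c}_{W_{v\to v}^{(j)}}\) of condition \ref{condn2} holds for free and every junction is handled by joint contractivity. You instead anchor the decomposition at a single vertex \(v^\ast\) visited infinitely often, which disposes of the entire tail in one stroke (and incidentally avoids the paper's appeal to ``the last index'' of a recurring vertex, which need not exist when that vertex recurs infinitely often), and then process the prefix by last-exit loop erasure. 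The price is that your terminal junction need not satisfy the hypothesis of \ref{condn2}; your quantitative patch --- a uniform \(c>0\) with \(\Xi(S_k)\le -c\) obtained from the cycle decomposition of closed walks, the finiteness of the set of cycles, and compactness of the dwell-time boxes, so that the bounded contribution of the edge \(q_{l-1},v^\ast\) is absorbed by finitely many tail blocks --- is valid, and is in fact more robust than the paper's mechanism since it does not invoke \ref{condn2} at that junction at all.

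One gap to repair: the grouping of \(e_{j-1}\) with ``the first closed-walk piece of the following excursion \(L_j\)'' presupposes that \(L_j\) is nontrivial. If \(L_j\) consists of the single vertex \(q_j\), there is no closed walk to pair the edge with. The fix is to treat each maximal run \(q_{j-1},q_j,\ldots,q_{j+s}\) spanning trivial excursions as one simple walk (simple because \(Q\) is) and pair it with the first closed-walk piece of the next nontrivial excursion; the last-exit property still gives \(q_{j-1}\notin\V(L_{j+s})\), so \ref{condn2} applies. If no nontrivial excursion remains before \(v^\ast\), fold the whole residual simple walk into your terminal counting block, whose bound only uses that the simple part has length at most \(N\) and hence bounded \(\Xi\)-contribution. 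This is a write-up omission rather than a flaw in the method; with it repaired the proof is complete.
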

    \begin{proof}
        Let \(m\) be the total number of vertices \(v\in\P\) such that \(\C_v\neq\emptyset\). Since \(\G\) has a finite number of vertices, there is at least one vertex \(v\in\P\) that appears in \(\overline{W}\) more than once.
        
        Let \(n_1\in\N\) be the first index satisfying \(\overline{v}_{n_1} = v^{(1)}\) such that \(v^{(1)}\) appears more than once in \(\overline{W}\). Let \(\overline{n}_1\in\N\) be the last index in \(\overline{W}\) such that \(\overline{v}_{\overline{n}_1} = v^{(1)}\). Thus, the subwalk \(\overline{v}_{n_1},\overline{v}_{n_1+1},\ldots,\overline{v}_{\overline{n}_1}\) of \(\overline{W}\) is a concatenation of closed walks \(W_{v^{(1)}\to v^{(1)}}^{(j)}\), \(j\in\{1,2,\ldots,r_{v^{(1)}}\}\), which are all contractive under condition \ref{condn1}. Since \(\G\) has a finite number of vertices, it follows that there is at least one vertex \(v\in\P\), \(v\neq v^{(1)}\), that appears more than once in \(\overline{v}_{\overline{n}_1+1},\overline{v}_{\overline{n}_1+2},\ldots\).\\
        Iterating the above set of arguments, we obtain that \(\overline{W}\) includes concatenation of \(W_{v^{(k)}\to v^{(k)}}^{(j_k)}\), \(j_k\in\{1,2,\ldots,r_{v^{(k)}}\}\), \(k=1,2,\ldots,\overline{r}\), \(\overline{r}\leq m\).
        
        Now, if \(n_1\neq 0\) and \(\overline{v}_0 = u\), then \(W_{u\to v^{(1)}}\) is a subwalk of \(\overline{W}\). In particular, \(W_{u\to v^{(1)}}\) is a simple walk. In view of condition \ref{condn2}, \(W_{u\to v^{(1)}}\) is jointly contractive with \(W_{v^{(1)}\to v^{(1)}}^{(j_1)}\), \(j_1\in\{1,2,\ldots,r_{v^{(1)}}\}\). Consequently, the walk \(\tilde{W}_{u\to v^{(1)}}\) which is a concatenation of \(W_{u\to v^{(1)}}\) and \(W_{v^{(1)}\to v^{(1)}}^{(j_1)}\) is contractive.
        
        Further, \(W_{v^{(1)}\to v^{(2)}},W_{v^{(2)}\to v^{(3)}},\ldots,W_{v^{(\overline{r}-1)}\to v^{(\overline{r})}}\) are possible subwalks of \(\overline{W}\). In view of condition \ref{condn2}, \(W_{v^{(i)}\to v^{(i+1)}}\) is jointly contractive with \(W_{v^{(i+1)}\to v^{(i+1)}}\) leading to contractivity of the walk \(\tilde{W}_{v^{(i)}\to v^{(i+1)}}\) which is a concatenation of \(W_{v^{(i)}\to v^{(i+1)}}\) and \(W_{v^{(i+1)}\to v^{(i+1)}}\), \(i=1,2,\ldots,\overline{r}-1\).
    \end{proof}
    
    Consider an interval \(]s,t]\subseteq[0,+\infty[\) of time. Let \(s=:\overline{\tau}_0<\overline{\tau}_1<\cdots<\overline{\tau}_n\) be the switching instants on \(]s,t]\). Consider the walk \(W = \sigma(\overline{\tau}_0),\sigma(\overline{\tau}_1),\ldots,\sigma(\overline{\tau}_n)\) on \(\G\). Define 
    \begin{align*}
        \overline{\Xi}(W) := \sum_{i=0}^{n-1}w(\sigma(\overline{\tau}_i))\tilde{D}_{\sigma(\overline{\tau}_i)}
        +\sum_{i=0}^{n-1}w(\sigma(\overline{\tau}_i),\sigma(\overline{\tau}_{i+1})),
    \end{align*}
    where \(\tilde{D}_v\in[\delta_{v},\Delta_v]\) is the dwell time of \(\sigma\) on subsystem \(v\).
    
    \begin{lemma}
    \label{lem:auxres3}
         The following is true:
        \begin{align*}
            &-\sum_{p\in\P_S}\abs{\lambda_p}\Tsw_p(s,t)+\sum_{p\in\P_U}\abs{\lambda_p}\Tsw_p(s,t)
            +\sum_{(p,q)\in E(\P)}(\ln\mu_{pq})\Nsw_{pq}(s,t) 
            = \overline{\Xi}(W).
        \end{align*}
    \end{lemma}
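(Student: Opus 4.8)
The plan is to treat this as a bookkeeping identity: I would start from the definition of \(\overline{\Xi}(W)\), split it into its vertex-weight part \(\sum_{i=0}^{n-1}w(\sigma(\overline{\tau}_i))\tilde{D}_{\sigma(\overline{\tau}_i)}\) and its edge-weight part \(\sum_{i=0}^{n-1}w(\sigma(\overline{\tau}_i),\sigma(\overline{\tau}_{i+1}))\), and show that each part, after regrouping, reproduces the corresponding sum on the left-hand side. The one structural fact I use throughout is that the switching instants \(s=\overline{\tau}_0<\overline{\tau}_1<\cdots<\overline{\tau}_n\) partition \(]s,t]\) into consecutive segments on each of which \(\sigma\) is constant and equal to \(\sigma(\overline{\tau}_i)\), so that the dwell time on the \(i\)-th segment is \(\tilde{D}_{\sigma(\overline{\tau}_i)}=\overline{\tau}_{i+1}-\overline{\tau}_i\).

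For the edge-weight part, I would substitute \(w(\sigma(\overline{\tau}_i),\sigma(\overline{\tau}_{i+1}))=\ln\mu_{\sigma(\overline{\tau}_i)\sigma(\overline{\tau}_{i+1})}\) and note that the index \(i\) corresponds to the switch occurring at \(\overline{\tau}_{i+1}\) from subsystem \(\sigma(\overline{\tau}_i)\) to subsystem \(\sigma(\overline{\tau}_{i+1})\). Partitioning the index set \(\{0,1,\ldots,n-1\}\) according to the ordered pair \((p,q)=(\sigma(\overline{\tau}_i),\sigma(\overline{\tau}_{i+1}))\), the number of indices in each class is exactly \(\Nsw_{pq}(s,t)\) by definition, which yields \(\sum_{i=0}^{n-1}w(\sigma(\overline{\tau}_i),\sigma(\overline{\tau}_{i+1}))=\sum_{(p,q)\in E(\P)}(\ln\mu_{pq})\Nsw_{pq}(s,t)\). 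For the vertex-weight part, I would substitute \(\tilde{D}_{\sigma(\overline{\tau}_i)}=\overline{\tau}_{i+1}-\overline{\tau}_i\) together with \(w(p)=-\abs{\lambda_p}\) for \(p\in\P_S\) and \(w(p)=\abs{\lambda_p}\) for \(p\in\P_U\). Grouping the segments according to the active subsystem \(p=\sigma(\overline{\tau}_i)\) and summing their lengths recovers \(\Tsw_p(s,t)\), so the vertex-weight part equals \(\sum_{p\in\P}w(p)\Tsw_p(s,t)\); splitting this along \(\P=\P_S\sqcup\P_U\) gives precisely \(-\sum_{p\in\P_S}\abs{\lambda_p}\Tsw_p(s,t)+\sum_{p\in\P_U}\abs{\lambda_p}\Tsw_p(s,t)\). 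Adding the two parts produces the claimed identity.

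The computation is routine, so the only point requiring genuine care is the boundary bookkeeping at the endpoints of \(]s,t]\). I must ensure that the dwell-time segments indexed by \(i=0,\ldots,n-1\) exactly exhaust \(]s,t]\), so that for each fixed subsystem the segment lengths sum to \(\Tsw_p(s,t)\) with no residual interval left unaccounted for, and that the transitions indexed by \(i=0,\ldots,n-1\) enumerate precisely the switches counted by \(\Nsw_{pq}(s,t)\). This is exactly where the convention \(\overline{\tau}_n=t\) (treating \(t\) as the right endpoint of the final segment) enters; once this alignment between the walk \(W\), the dwell-time segmentation, and the quantities \(\Tsw_p(s,t)\) and \(\Nsw_{pq}(s,t)\) is pinned down, the identity follows by merely interchanging the order of summation.
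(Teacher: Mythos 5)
Your proposal is correct and takes essentially the same route as the paper's own proof, which likewise splits \(\overline{\Xi}(W)\) into its vertex-weight and edge-weight parts and regroups the former by active subsystem (recovering \(\Tsw_p(s,t)\)) and the latter by ordered pair (recovering \(\Nsw_{pq}(s,t)\)). Your explicit attention to the endpoint bookkeeping --- ensuring the segments indexed by \(i=0,\ldots,n-1\) exhaust \(]s,t]\) via the convention identifying the final right endpoint with \(t\) --- is a detail the paper's proof passes over silently, and is worth keeping.
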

    
    \begin{proof}
    We have
    \begin{align*}
        \overline{\Xi}(W) &= \sum_{i=0}^{n-1}w(\sigma(\overline{\tau}_i))\tilde{D}_{\sigma(\overline{\tau}_i)}
        +\sum_{i=0}^{n-1}w(\sigma(\overline{\tau}_i),\sigma(\overline{\tau}_{i+1}))\\
        &=\sum_{p\in\P_S}w(p)\sum_{\substack{i=0\\\sigma(\overline{\tau}_i)=p}}^{n-1}\tilde{D}_p
        +\sum_{p\in\P_U}w(p)\sum_{\substack{i=0\\\sigma(\overline{\tau}_i)=p}}^{n-1}\tilde{D}_p
        +\sum_{(p,q)\in E(\P)}w(p,q)\sum_{\substack{i=0\\\sigma(\overline{\tau}_i)=p\\\sigma(\overline{\tau}_i)=q}}^{n-1}1\\
        &=-\sum_{p\in\P_S}\abs{\lambda_p}\Tsw_p(s,t)+\sum_{p\in\P_U}\abs{\lambda_p}\Tsw_p(s,t)
        +\sum_{(p,q)\in E(\P)}w(p,q)\Nsw_{pq}(s,t).
    \end{align*}
    \end{proof}
    
    \begin{lemma}
    \label{lem:auxres2}
        Let \(W\) be a concatenation of the walks \(W_j = v_0^{(j)},v_1^{(j)},\ldots,v_{n_j}^{(j)}\), \(j=1,2,\ldots,r\), on \(\G\) with \(v_{n_j}^{(j)}=v_0^{(j+1)}\), \(j=1,2,\ldots,r-1\). Consider the segment of \(\sigma\) corresponding to \(W\). We have \(\displaystyle{\overline{\Xi}(W) = \sum_{j=1}^{r}\overline{\Xi}(W_j)}\).
    \end{lemma}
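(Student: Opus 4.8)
The plan is to establish the identity by splitting the two index-sums that define \(\overline{\Xi}(W)\) into consecutive blocks, one block per constituent walk \(W_j\), and matching block \(j\) to \(\overline{\Xi}(W_j)\). Write the concatenation as \(W = u_0,u_1,\ldots,u_n\) with \(n = \sum_{j=1}^{r}n_j\), and set \(N_0 := 0\) and \(N_j := \sum_{k=1}^{j}n_k\), so that \(N_r = n\). By the concatenation hypothesis \(v_{n_j}^{(j)} = v_0^{(j+1)}\), the walk \(W_j\) occupies positions \(N_{j-1},N_{j-1}+1,\ldots,N_j\) of \(W\); concretely \(u_{N_{j-1}+l} = v_l^{(j)}\) for \(l=0,1,\ldots,n_j\), and the junction vertices \(u_{N_1},\ldots,u_{N_{r-1}}\) are exactly the shared endpoints. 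Each position \(i\) of \(W\) corresponds to a unique switching instant \(\overline{\tau}_i\) of the fixed segment of \(\sigma\), so the dwell time \(\tilde{D}_{u_i}\) is fixed by the \emph{position} rather than merely by the subsystem label; since the decomposition preserves this position-to-instant correspondence, the dwell times attached to \(W_j\) are the same numbers that appear in \(W\).

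For the edge-weight term the decomposition is immediate. The \(n\) edges of \(W\) form the disjoint concatenation of the edge-lists of \(W_1,\ldots,W_r\), because the edge terminating at a junction vertex belongs to \(W_j\) while the edge emanating from it belongs to \(W_{j+1}\). Splitting \(\sum_{i=0}^{n-1}w(u_i,u_{i+1})\) along the blocks \(i = N_{j-1},\ldots,N_j-1\) and reindexing by \(l = i - N_{j-1}\) then yields \(\sum_{j=1}^{r}\sum_{l=0}^{n_j-1}w(v_l^{(j)},v_{l+1}^{(j)})\), i.e., the sum of the edge terms of the \(\overline{\Xi}(W_j)\).

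The only point requiring care is the vertex-weight term, and here the definition of \(\overline{\Xi}\) does the work for us: its vertex sum runs only to index \(n-1\), so it omits the dwell-time contribution of the terminal vertex \(u_n = v_{n_r}^{(r)}\). I would split \(\sum_{i=0}^{n-1}w(u_i)\tilde{D}_{u_i}\) along the same blocks \(i = N_{j-1},\ldots,N_j-1\); since \(N_r = n\), these blocks partition \(\{0,1,\ldots,n-1\}\) exactly. Block \(j\) covers local indices \(l = 0,\ldots,n_j-1\), so it includes the origin \(v_0^{(j)}\) but omits the terminus \(v_{n_j}^{(j)}\) of \(W_j\) --- which is precisely the vertex sum appearing in \(\overline{\Xi}(W_j)\). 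The crucial bookkeeping observation is that each junction vertex \(v_{n_j}^{(j)} = v_0^{(j+1)}\) is dropped as the terminus of block \(j\) but reappears, with the same position and hence the same dwell time, as the origin of block \(j+1\); it is therefore counted exactly once, while the overall terminal vertex is omitted throughout, consistently with \(\overline{\Xi}(W)\).

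Combining the two decompositions gives \(\overline{\Xi}(W) = \sum_{j=1}^{r}\overline{\Xi}(W_j)\). Equivalently, one could first prove the two-walk case \(\overline{\Xi}(W_1W_2) = \overline{\Xi}(W_1)+\overline{\Xi}(W_2)\) --- in which the junction vertex is omitted by \(\overline{\Xi}(W_1)\) but supplied by \(\overline{\Xi}(W_2)\) --- and then induct on \(r\). I expect the main, and essentially only, obstacle to be the index bookkeeping at the junctions: verifying that the asymmetric vertex sum, which discards the last vertex but retains the first, causes each shared vertex to be tallied once and only once.
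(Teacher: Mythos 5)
Your proposal is correct and follows essentially the same route as the paper's proof, which likewise writes out the concatenated vertex sequence, uses the identification \(v_{n_j}^{(j)}=v_0^{(j+1)}\) to split both the vertex-weight and edge-weight sums into consecutive blocks, and reads off each block as \(\overline{\Xi}(W_j)\). Your treatment is merely more explicit about the index bookkeeping at the junctions (each shared vertex dropped as a terminus but restored as the next origin) and about dwell times being tied to positions/switching instants, points the paper leaves implicit.
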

    
    \begin{proof}
     By construction, \(W = v_0^{(1)},v_1^{(1)},\ldots,v_{n_1}^{(1)},v_1^{(2)},v_2^{(2)},\ldots,
    v_{n_2}^{(2)},\ldots,v_{n_{r-1}}^{(r-1)},v_1^{(r)},v_2^{(r)},\ldots,v_{n_r}^{(r)}\). In view of \(v_{n_j}^{(j)} = v_0^{(j+1)}\), \(j=1,2,\ldots,r-1\), we have 
    \begin{align*}
        \overline{\Xi}(W) &= \sum_{i=0}^{n_1-1}w(v_i^{(1)})\tilde{D}_{v_i^{(1)}}+\sum_{i=0}^{n_1-1}w(v_i^{(1)},v_{i+1}^{(1)})+\sum_{i=0}^{n_2-1}w(v_i^{(2)})\tilde{D}_{v_i^{(2)}}+\sum_{i=0}^{n_2-1}w(v_i^{(2)},v_{i+1}^{(2)})+\cdots+\\
        &\quad\quad+\sum_{i=0}^{n_r-1}w(v_i^{(r)})\tilde{D}_{v_i^{(r)}}+\sum_{i=0}^{n_r-1}w(v_i^{(r)},v_{i+1}^{(r)})\\
        &=\overline{\Xi}(W_1)+\overline{\Xi}(W_2)+\cdots+\overline{\Xi}(W_r).
    \end{align*}
    \end{proof}
    
    We are now in a position to present our proof of Theorem \ref{t:mainres}.
    
    {\it Proof of Theorem \ref{t:mainres}}: Fix \(t > 0\). Recall that \(0=:\tau_{0}<\tau_{1}<\cdots<\tau_{\Nsw(0,t)}\) are the switching instants before (and including) \(t\). In view of \eqref{e:key2}, we have that
        \begin{align*}
            V_{\sigma(t)}(x(t))&\leq\exp\bigl(-\lambda_{\sigma(\tau_{\Nsw(0,t)})}(t-\tau_{\mathrm{N}(0,t)})\bigr)V_{\sigma(t)}(x(\tau_{\Nsw(0,t)}))\nn\\
            &\:\:\:\:+\bigl(\gamma_{1}(\norm{v}_{[0,t]})+\gamma_{2}(\norm{y}_{[0,t]})\bigr)\times
            \int_{\tau_{\Nsw(0,t)}}^{t}\exp\bigl(-\lambda_{\sigma(\tau_{\Nsw(0,t)})}(t-s)\bigr)ds.
        \end{align*}
        An iteration of the above with an application of \eqref{e:key3} leads to
        \begin{align}
        \label{e:pf1_step1}
            V_{\sigma(t)}(x(t))&\leq\psi_{1}(t)V_{\sigma(0)}(x_{0})
           +\bigl(\gamma_{1}(\norm{v}_{[0,t]})+\gamma_{2}(\norm{y}_{[0,t]})\bigr)\psi_{2}(t),
        \end{align}
        where
        \begin{align}
        \label{e:psi1_defn}
            \psi_{1}(t) := \exp\biggl(-&\sum_{\substack{i=0\\\tau_{\Nsw(0,t)+1}:=t}}^{\Nsw(0,t)}\lambda_{\sigma(\tau_{i})}(\tau_{i+1}-\tau_{i})+\sum_{i=0}^{\Nsw(0,t)-1}\ln\mu_{\sigma(\tau_{i})\sigma(\tau_{i+1})}\biggr),
        \end{align}
        and
        \begin{align}
        \label{e:psi2_defn}
            \psi_{2}(t) &:= \sum_{\substack{i=0\\\tau_{\Nsw(0,t)+1}:=t}}^{\Nsw(0,t)}\Biggl(\exp\Biggl(-\sum_{\substack{j=i+1\\\tau_{\Nsw(0,t)+1}:=t}}^{\Nsw(0,t)}\lambda_{\sigma(\tau_{j})}(\tau_{j+1}-\tau_{j})
            +\sum_{j=i+1}^{\Nsw(0,t)-1}\ln\mu_{\sigma(\tau_{j})\sigma(\tau_{j+1})}\Biggr)\\
            &\quad\quad\quad\quad\times\frac{1}{\lambda_{\sigma(\tau_{i})}}\bigl(1-\exp\bigl(-\lambda_{\sigma(\tau_{i})}(\tau_{i+1}-\tau_{i})\bigr)\bigr)\Biggr).
        \end{align}
        An application of \eqref{e:key1} to \eqref{e:pf1_step1} leads to
        \begin{align*}
            \underline{\alpha}(\norm{x(t)})&\leq\psi_{1}(t)\overline{\alpha}(\norm{x_{0}})
            +\bigl(\gamma_{1}(\norm{v}_{[0,t]})+\gamma_{2}(\norm{y}_{[0,t]})\bigr)\psi_{2}(t).
        \end{align*}
        In view of Definition \ref{d:ioss}, for IOSS of the switched system \eqref{e:swsys}, we need to show that
        \begin{enumerate}[label=\roman*),leftmargin=*]
            \item\label{verify-1} \(\overline{\alpha}(*)\psi_{1}(\cdot)\) can be bounded above by a class \(\KL\) function, and
            \item\label{verify-2} \(\psi_{2}(\cdot)\) is bounded above by a constant.
        \end{enumerate}
        
        We first verify i). We already have from Assumption \ref{assump:key1} that \(\overline\alpha\in\Kinfty\). Therefore, it remains to show that \(\psi_1(\cdot)\) is bounded above by a class \(\mathcal{L}\) function. We have
        \begin{align}
        \label{e:pf1_step2}
            &-\sum_{\substack{i=0\\\tau_{\Nsw(0,t)+1}:=t}}^{\Nsw(0,t)}\lambda_{\sigma(\tau_i)}(\tau_{i+1}-\tau_{i})
            +\sum_{i=0}^{\Nsw(0,t)-1}\ln\mu_{\sigma(\tau_i)\sigma(\tau_{i+1})}\nonumber\\
            =&-\sum_{p\in\P}\lambda_p\Tsw_p(0,t)+\sum_{(p,q)\in E(\P)}(\ln\mu_{pq})\Nsw_{pq}(0,t)\nonumber\\
            =&-\sum_{p\in\P_S}\abs{\lambda_p}\Tsw_p(0,t)+\sum_{p\in\P_U}\abs{\lambda_p}\Tsw_p(0,t)
            +\sum_{(p,q)\in E(\P)}(\ln\mu_{pq})\Nsw_{pq}(0,t).
        \end{align}
        Recall that \(\sigma\) corresponds to the infinite walk \(\overline{W}\) on \(\G\). In view of Lemma \ref{lem:auxres1}, \(\overline{W}\) is a concatenation of contractive walks on \(\G\). Let the subwalk \(\tilde{W}=\sigma(\tau_0),\sigma(\tau_1),\ldots,\sigma(\tau_{\Nsw(0,t)})\) of \(\overline{W}\) be a concatenation of the contractive walks \(W_1,W_2,\ldots,W_r\) and an initial segment of the contractive walk \(W_{r+1}\). For \(W_j = v_0^{(j)},v_1^{(j)},\ldots,v_{n_j-1}^{(j)},v_{n_j}^{(j)}\), \(j=1,2,\ldots,r+1\), with \(v_{n_j}^{(j)}=v_0^{(j+1)}\), \(j=1,2,\ldots,r\), we define \(\displaystyle{D_{W_j}=\sum_{i=0}^{n_j-1}\tilde{D}_{v_i^{(j)}}}\), where \(\tilde{D}_{v_i}^{(j)}\in[\delta_{v_i^{(j)}},\Delta_{v_i^{(j)}}]\) is the duration of time that \(\sigma\) dwelt on subsystem \(v_{i}^{(j)}\). Define \(\displaystyle{\hat{D}_{\tilde{W}} = \max_{j=1,2,\ldots,r}D_{W_j}}\). Let \(\overline{\Xi}(W_j)=-\varepsilon_j\), \(j=1,2,\ldots,r\). Let \(\overline\ell\) be the maximum length of a walk on \(\G\) that is not contractive. Define \(\displaystyle{\hat{w} = \max_{(u,v)\in E(\P)}\bigl(w(u)D_u+w(u,v)\bigr)}\), where \(D_u=\delta_u\), if \(u\in\P_S\) and \(D_u=\Delta_u\), if \(u\in\P_U\). Then in view of Lemma \ref{lem:auxres2}, we have that
        \begin{align*}
            \overline{\Xi}(W) = \sum_{j=1}^{r}\overline{\Xi}(W_j) + \overline{\Xi}(W_{r-1})\leq-\sum_{j=1}^{r}\varepsilon_j+\overline\ell\hat{w}.
        \end{align*}
        Rewrite \(\overline\ell\hat{w}=\kappa\). It also follows from \eqref{e:psi1_defn}, \eqref{e:pf1_step2} and Lemma \ref{lem:auxres3} that 
        \(\displaystyle{\psi_1(t)\leq\exp\Bigl(-\sum_{j=1}^{r}\varepsilon_j+\kappa\Bigr)}\). Let \(\varphi:[0,t]\to\R\) be a function connecting \((0,\exp(\kappa)+\hat{D}_{\tilde{W}})\), \((D_{W_1},\exp(-\varepsilon_1+\kappa)\),
        \((D_{W_1}+D_{W_2},\exp(-\varepsilon_1-\varepsilon_2+\kappa),\ldots\), \((D_{W_1}+D_{W_2}+\cdots+D_{W_r},\exp(-\varepsilon_1-\varepsilon_2-\cdots-\varepsilon_{r-1}+\kappa))\),
        \((t,\exp(-\varepsilon_1-\varepsilon_2-\cdots-\varepsilon_{r}+\kappa))\) with straight line segments. By construction, \(\varphi\) is an upper envelope of \(T\mapsto\varphi_1(T)\) on \([0,t]\), is continuous, decreasing and tends to \(0\) as \(t\to+\infty\). Consequently, \(\varphi\) is a class \(\mathcal{L}\) function.
        
        We next verify ii). The function \(\psi_2(t)\) is
        \begin{align}
        \label{e:pf1_step3}
            &\sum_{p\in\P_{S}}\frac{1}{\abs{\lambda_{p}}}\sum_{\substack{i=0\\\sigma(\tau_{i})=p\\
            \tau_{\Nsw(0,t)+1}:=t}}^{\Nsw(0,t)}\Biggl(\exp\Biggl(-\sum_{k\in\P_{S}}\abs{\lambda_{k}}\Tsw_{k}(\tau_{i+1},t)
            +\sum_{k\in\P_{U}}\abs{\lambda_{k}}\Tsw_{k}(\tau_{i+1},t)\nn\\
            &\hspace*{1cm}+\sum_{(k,\ell)\in E(\P)}(\ln\mu_{k\ell})\Nsw_{k\ell}(\tau_{i+1},t)\Biggr)
            \biggl(1-\exp(-\abs{\lambda_{p}}(\tau_{i+1}-\tau_{i}))\biggr)\Biggr)\nn\\
            -&\sum_{p\in\P_{U}}\frac{1}{\abs{\lambda_{p}}}\sum_{\substack{i=0\\
            \sigma(\tau_i)=p\\\tau_{\Nsw(0,t)+1}:=t}}^{\Nsw(0,t)}\Biggl(\exp\Biggl(-\sum_{k\in\P_{S}}\abs{\lambda_{k}}\Tsw_{k}(\tau_{i+1},t)+\sum_{k\in\P_{U}}\abs{\lambda_{k}}\Tsw_{k}(\tau_{i+1},t)\nn\\
            &\hspace*{1cm}+\sum_{(k,\ell)\in E(\P)}(\ln\mu_{k\ell})\Nsw_{k\ell}(\tau_{i+1},t)\Biggr)
           \biggl(1-\exp(\abs{\lambda_{p}}(\tau_{i+1}-\tau_{i}))\biggr)\Biggr)\nn\\
            \leq&\sum_{p\in\P_{S}}\frac{1}{\abs{\lambda_{p}}}
            \sum_{\substack{i=0\\\sigma(\tau_i)=p\\\tau_{\Nsw(0,t)+1}:=t}}^{\Nsw(0,t)}\Biggl(\exp\Biggl(-\sum_{k\in\P_{S}}\abs{\lambda_{k}}\Tsw_{k}(\tau_{i+1},t)\nn\\
            &\hspace*{1cm}+\sum_{k\in\P_{U}}\abs{\lambda_{k}}\Tsw_{k}(\tau_{i+1},t)+\sum_{(k,\ell)\in E(\P)}(\ln\mu_{k\ell})\Nsw_{k\ell}(\tau_{i+1},t)\Biggr)\Biggr)\nn\\
            +&\sum_{p\in\P_{U}}\frac{1}{\abs{\lambda_{p}}}\sum_{\substack{i=0\\\sigma(\tau_i)=p\\\tau_{\Nsw(0,t)+1}:=t}}^{\Nsw(0,t)}\Biggl(\exp\Biggl(-\sum_{k\in\P_{S}}\abs{\lambda_{k}}\Tsw_{k}(\tau_{i},t)\nn\\
            &\hspace*{1cm}+\sum_{k\in\P_{U}}\abs{\lambda_{k}}\Tsw_{k}(\tau_{i},t)+\sum_{(k,\ell)\in E(\P)}(\ln\mu_{k\ell})\Nsw_{k\ell}(\tau_{i},t)\Biggr)\Biggr)\nn\\
            \leq&\sum_{p\in\P_{S}}\frac{1}{\abs{\lambda_{p}}}\sum_{i=0}^{\Nsw(0,t)}\Biggl(\exp\Biggl(-\sum_{k\in\P_{S}}
            \abs{\lambda_{k}}\Tsw_{k}(\tau_{i+1},t)\nn\\
            &\hspace*{1cm}+\sum_{k\in\P_{U}}\abs{\lambda_{k}}\Tsw_{k}(\tau_{i+1},t)+\sum_{(k,\ell)\in E(\P)}(\ln\mu_{k\ell})\Nsw_{k\ell}(\tau_{i+1},t)\Biggr)\Biggr)\nn\\
            +&\sum_{p\in\P_{U}}\frac{1}{\abs{\lambda_{p}}}\sum_{i=0}^{\Nsw(0,t)}\Biggl(\exp\Biggl(-\sum_{k\in\P_{S}}
            \abs{\lambda_{k}}\Tsw_{k}(\tau_{i},t)\nn\\
            &\hspace*{1cm}+\sum_{k\in\P_{U}}\abs{\lambda_{k}}\Tsw_{k}(\tau_{i},t)+\sum_{(k,\ell)\in E(\P)}(\ln\mu_{k\ell})\Nsw_{k\ell}(\tau_{i},t)\Biggr)\Biggr).
        \end{align}
        We are interested in boundedness of the sum
        \begin{align*}
            &\sum_{i=0}^{\Nsw(0,t)}\Biggl(\exp\Biggl(-\sum_{k\in\P_{S}}
            \abs{\lambda_{k}}\Tsw_{k}(\tau_{i},t)
            +\sum_{k\in\P_{U}}\abs{\lambda_{k}}\Tsw_{k}(\tau_{i},t)
            +\sum_{(k,\ell)\in E(\P)}(\ln\mu_{k\ell})\Nsw_{k\ell}(\tau_{i},t)\Biggr).
        \end{align*}
        Let 
        \begin{align*}
            \Theta(s,t)&:= -\sum_{k\in\P_{S}}
            \abs{\lambda_{k}}\Tsw_{k}(\tau_{i},t)
            +\sum_{k\in\P_{U}}\abs{\lambda_{k}}\Tsw_{k}(\tau_{i},t)
            +\sum_{(k,\ell)\in E(\P)}(\ln\mu_{k\ell})\Nsw_{k\ell}(\tau_{i},t).
        \end{align*}
        It follows that
        \begin{align*}
            &\sum_{i=0}^{\Nsw(0,t)}\exp\Bigl(\Theta(\tau_i,t)\Bigr) = \exp\Bigl(\Theta(\tau_0,t)\Bigr) 
            +\exp\Bigl(\Theta(\tau_1,t)\Bigr)+\cdots+\exp\Bigl(\Theta(\tau_{n_r},t)\Bigr)\\
            &\hspace*{3cm}+\cdots+
            \exp\Bigl(\Theta(\tau_\Nsw(0,t),t)\Bigr)\\
            =&\exp(-(\varepsilon_1+\varepsilon_2+\cdots+\varepsilon_r))+
            \exp(-(\varepsilon_1+\varepsilon_2+\cdots+\varepsilon_r)+\hat{w})\\
            &\quad+ \exp(-(\varepsilon_1+\varepsilon_2+\cdots+\varepsilon_r)+2\hat{w})+\cdots
            + \exp(-(\varepsilon_1+\varepsilon_2+\cdots+\varepsilon_r)+(n_1-1)\hat{w})\\
            &\quad+\exp(-(\varepsilon_2+\cdots+\varepsilon_r))+
            \exp(-(\varepsilon_2+\cdots+\varepsilon_r)+\hat{w})
            + \exp(-(\varepsilon_2+\cdots+\varepsilon_r)+2\hat{w})+\cdots\\
            &\quad+ \exp(-(\varepsilon_2+\cdots+\varepsilon_r)+(n_1-1)\hat{w})
            +\cdots+\exp(\hat{w})+\exp(2\hat{w})+\cdots+\exp(\overline{\ell}\hat{w})\\
            =&\exp(-(\varepsilon_1+\varepsilon_2+\cdots+\varepsilon_r))
            (1+\exp(\hat{w})+\exp(2\hat{w})+\cdots+\exp((n_1-1)\hat{w}))\\
            &\quad+\exp(-(\varepsilon_2+\cdots+\varepsilon_r))
            (1+\exp(\hat{w})+\exp(2\hat{w})+\cdots+\exp((n_2-1)\hat{w}))\\
            &\quad+(\exp(\hat{w})+\exp(2\hat{w})+\cdots+\exp(\overline{\ell}\hat{w}))\\
            =&\exp(-(\varepsilon_1+\varepsilon_2+\cdots+\varepsilon_r))\frac{\exp(n_1\hat{w})-1}{\exp(\hat{w})-1}
            +\exp(-(\varepsilon_2+\cdots+\varepsilon_r))\frac{\exp(n_2\hat{w})-1}{\exp(\hat{w})-1}\\
            &\hspace*{2cm}+\cdots+\frac{\exp((\overline{\ell}+1)\hat{w})-1}{\exp(\hat{w})-1}-1.
        \end{align*}
        Similarly, boundedness of the sum 
        \begin{align*}
            &\sum_{i=0}^{\Nsw(0,t)}\Biggl(\exp\Biggl(-\sum_{k\in\P_{S}}
            \abs{\lambda_{k}}\Tsw_{k}(\tau_{i+1},t)
            +\sum_{k\in\P_{U}}\abs{\lambda_{k}}\Tsw_{k}(\tau_{i+1},t)          
            +\sum_{(k,\ell)\in E(\P)}(\ln\mu_{k\ell})\Nsw_{k\ell}(\tau_{i+1},t)\Biggr)
        \end{align*}
        can be established. Since the sets \(\P_S\) and \(\P_U\) are finite, ii) follows.
        
        This completes our proof of Theorem \ref{t:mainres}.\hspace*{2.4cm}\(\square\)
        
    {\it Proof of Proposition \ref{prop:mainres1}}:
    We have
    \begin{align*}
        \Xi(W) &:= \sum_{i=0}^{n-1}w(v_i)D_{v_i} + \sum_{i=0}^{n-1}w(v_i,v_{i+1})\\
        &= \sum_{\substack{i=0\\v_i\in\P_S}}^{n-1}w(v_i)D_{v_i} + \sum_{\substack{i=0\\v_i\in\P_U}}^{n-1}w(v_i)D_{v_i} + \sum_{i=0}^{n-1}w(v_i,v_{i+1})\\
        &\leq -\sum_{\substack{i=0\\v_i\in\P_S}}^{n-1}\abs{\lambda_{v_i}}\delta_{v_i}+
        \sum_{\substack{i=0\\v_i\in\P_U}}^{n-1}\abs{\lambda_{v_i}}\Delta_{v_i}+
        \sum_{i=0}^{n-1}w(v_i,v_{i+1})\\
        &= \sum_{\substack{i=0\\v_i\in\P_S}}^{n-1}w(v_i)\delta_{v_i} + \sum_{\substack{i=0\\v_i\in\P_U}}^{n-1}w(v_i)\Delta_{v_i} + \sum_{i=0}^{n-1}w(v_i,v_{i+1}).
    \end{align*}
    In view of \eqref{e:aux_condn1}, the right-hand side of the above expression is strictly less than \(0\). It follows that \(W\) is contractive.\hspace*{1.5cm}\( \square\)
    
    {\it Proof of Proposition \ref{prop:mainres2}}: Consider a closed walk \(W = v_0,v_1,\ldots,v_{n-1},v_n\) on \(\G\) with \(v_0=v_n=v\). 
    
    If \(W\) is a cycle, then it is contractive by hypothesis of Proposition \ref{prop:mainres2}. We will show that if \(W\) is not a cycle, then it can be written as a concatenation of cycles. By definition of a closed walk that is not a cycle, there is at least one vertex \(\overline{v}\in\P\) such that \(\overline{v}\) appears more than once in \(W\).
    
    We decompose \(W\) into the following subwalks: \(\tilde{W}_{v_0\to v_{n_1}},\tilde{W}_{v_{n_1}\to v_{\overline{n}_1}},\tilde{W}_{v_{\overline{n}_1}\to v_{{n}_2}},\tilde{W}_{v_{n_2}\to v_{\overline{n}_2}},\ldots\),\\\(\tilde{W}_{v_{n_r}\to v_{\overline{n}_r}},\tilde{W}_{v_{\overline{n}_r}\to v_{0}}\), where \(n_1\) satisfying \(1\leq n_1\leq n-1\) is the first index where a vertex \(\overline{v}^{(1)}\in\P\) that is repeated in \(W\) appears first in \(W\), 
    \(\overline{n}_1\) satisfying \(n_1<\overline{n}_1<n-1\) is the last index where \(\overline{v}^{(1)}\) appears in \(W\),
    \(n_2\) satisfying \(\overline{n}_1<n_2< n-1\) is the first index where a vertex \(\overline{v}^{(2)}\in\P\) that is repeated in the subwalk \(\tilde{W}_{v_{\overline{n}_1+1}\to v_{n-1}}\) appears first in  \(\tilde{W}_{v_{\overline{n}_1+1}\to v_{n-1}}\), \(\overline{n}_2\) satisfying \(n_2<\overline{n}_2< n-1\) is the last index where \(\overline{v}^{(2)}\) appears in \(\tilde{W}_{v_{\overline{n}_1+1}\to v_{n-1}}\), \ldots, 
    \(n_r\) satisfying \(\overline{n}_{r-1}<n_r< n-1\) is the first index where a vertex \(\overline{v}^{(r)}\in\P\) that is repeated in the subwalk \(\tilde{W}_{v_{\overline{n}_{r-1}+1}\to v_{n-1}}\) appears first in  \(\tilde{W}_{v_{\overline{n}_{r-1}+1}\to v_{n-1}}\), \(\overline{n}_r\) satisfying \(n_r<\overline{n}_r\leq n-1\) is the last index where \(\overline{v}^{(r)}\) appears in \(W\). Clearly, \(\overline{\overline{W}} = v_0,v_1,\ldots,v_{n_1},v_{\overline{n}_1+1},v_{\overline{n}_1+2},\ldots,v_{n_2},
    v_{\overline{n}_2+1},v_{\overline{n}_2+2},\ldots,v_{n_r}\),\\\(
    v_{\overline{n}_r+1},v_{\overline{n}_r+2},\ldots,v_n\) is a cycle. Now, if all the closed walks \(\tilde{W}_{v_{n_j}\to v_{\overline{n}_j}}\), \(j=1,2,\ldots,r\) are cycles, then they are contractive by hypothesis.
    
    Consider a closed walk \(\tilde{W}_{v_{n_j}\to v_{\overline{n}_j}}\), \(j\in\{1,2,\ldots,r\}\) that is not a cycle. Let \(n'\) satisfying \(n_j<n'<\overline{n}_j\) be the last index where a vertex \(\overline{v}\in\P\) that appears more than once in \(W\), appears with exactly one appearance left in \(\tilde{W}_{v_n'\to v_{\overline{n}_j}}\). Let \(n''\) satisfying \(n'<n''<\overline{n}_j\) be the index where \(\tilde{v}\) appears last. Then \(W_{v_{n'}\to v_{n''}}\) is a cycle. Indeed, otherwise, the definition of \(n'\) is violated. Also, \(W' = v_{n_j},v_{n_j+1},\ldots,v_{n'},v_{n''+1},\ldots,v_{\overline{n}_j}\) is a closed walk. Iterating the above, we arrive at a cycle \(W''\). Indeed, it follows from the fact that \(\tilde{W}_{v_{n_j}\to v_{\overline{n}_j}}\) is a closed walk. Consequently, \(\tilde{W}_{v_{n_j}\to v_{\overline{n}_j}}\) is a concatenation of cycles, which by hypothesis of Proposition \ref{prop:mainres2} are contractive. In view of Lemma \ref{lem:auxres1} it follows that \(\tilde{W}_{v_{n_j}\to v_{\overline{n}_j}}\) is a contractive closed walk.
    
    In view of contractivity of \(\overline{\overline{W}}\) and the closed walks \(\tilde{W}_{v_{n_j}\to v_{\overline{n}_j}}\), \(j=1,2,\ldots,r\), we have that the closed walk \(W\) is contractive.


\begin{thebibliography}{10}

\bibitem{Angeli-Sontag_1999}
{\sc D.~Angeli and E.~D. Sontag}, {\em Forward completeness, unboundedness
  observability, and their {L}yapunov characterizations}, Systems Control
  Lett., 38 (1999), pp.~209--217.

\bibitem{Branicky_1998}
{\sc M.~S. Branicky}, {\em Multiple {L}yapunov functions and other analysis
  tools for switched and hybrid systems}, IEEE Trans. Automat. Control, 43
  (1998), pp.~475--482.
\newblock Hybrid control systems.

\bibitem{Filippov}
{\sc A.~F. Filippov}, {\em Differential {E}quations with {D}iscontinuous
  {R}ighthand {S}ides}, vol.~18 of Mathematics and its Applications (Soviet
  Series), Kluwer Academic Publishers Group, Dordrecht, 1988.
\newblock Translated from the Russian.

\bibitem{Sun}
{\sc S.~S. Ge and Z.~Sun}, {\em Switched Linear Systems: Control and Design},
  Springer, London, 2005.

\bibitem{Krichman_2001}
{\sc M.~Krichman, E.~D. Sontag, and Y.~Wang}, {\em Input-output-to-state
  stability}, SIAM Journal on Control and Optimization, 39(6) (2001),
  pp.~1874--1928 (electronic).

\bibitem{ghi}
{\sc A.~Kundu}, {\em Yet another stability condition for switched nonlinear
  systems}.
\newblock arxiv preprint: 2207.07764.

\bibitem{abc}
\leavevmode\vrule height 2pt depth -1.6pt width 23pt, {\em Stabilizing switched
  nonlinear systems under restricted switching}, Proceedings of the ACM
  Conference on Hybrid Systems: Computation and Control,  (2018), pp.~101--110.

\bibitem{def}
\leavevmode\vrule height 2pt depth -1.6pt width 23pt, {\em Stabilizing switched
  systems under a class of restricted switching signals}, Proceedings of the
  IEEE Conference on Decision and Control,  (2021), pp.~3210--3215.

\bibitem{Li_2018}
{\sc X.~Li, P.~Li, and Q.~G. Wang}, {\em Input/output-to-state stability of
  impulsive switched systems}, Systems Control Lett., 116 (2018), pp.~1--7.

\bibitem{Liberzon}
{\sc D.~Liberzon}, {\em Switching in {S}ystems and {C}ontrol}, Systems \&
  Control: Foundations \& Applications, Birkh\"auser Boston Inc., Boston, MA,
  2003.

\bibitem{Liu_2022b}
{\sc S.~Liu, A.~Russo, D.~Liberzon, and A.~Cavallo}, {\em
  Integral-input-to-state stability of switched nonlinear systems under slow
  switching}, IEEE Trans. Automat. Control, 67 (2022), pp.~5841--5855.

\bibitem{Liu_2022a}
{\sc S.~Liu, A.~Tanwani, and D.~Liberzon}, {\em I{SS} and integral-{ISS} of
  switched systems with nonlinear supply functions}, Math. Control Signals
  Systems, 34 (2022), pp.~297--327.

\bibitem{Mancilla_2005}
{\sc J.~L. Mancilla-Aguilar, R.~Garc{\'{\i}}a, E.~Sontag, and Y.~Wang}, {\em On
  the representation of switched systems with inputs by perturbed control
  systems}, Nonlinear Analysis, 60 (2005), pp.~1111--1150.

\bibitem{Liberzon_2012}
{\sc M.~M\"uller and D.~Liberzon}, {\em Input/output-to-state stability and
  state-norm estimators for switched nonlinear systems}, Automatica, 48 (2012),
  pp.~2029--2039.

\bibitem{Sanfelice_2010}
{\sc R.~G. Sanfelice}, {\em Results on input-to-output and
  input-output-to-state stability for hybrid systems and their
  interconnections}, Proceedings of the IEEE Conference on Decision and
  Control,  (2010), pp.~2396--2401.

\bibitem{Sontag-Wang_1995}
{\sc E.~D. Sontag and Y.~Wang}, {\em On characterizations of the input-to-state
  stability property}, Systems Control Lett., 24 (1995), pp.~351--359.

\end{thebibliography}

\end{document}